\documentclass[journal,comsoc]{IEEEtran}
\usepackage[T1]{fontenc}% optional T1 font encoding

\usepackage{subfigure}
\usepackage{tabularx}
\usepackage{xcolor}
\usepackage{multirow}
\usepackage{booktabs}
\usepackage{array}
\newcolumntype{L}[1]{>{\raggedright\let\newline\\\arraybackslash\hspace{0pt}}m{#1}}
\newcolumntype{C}[1]{>{\centering\let\newline\\\arraybackslash\hspace{0pt}}m{#1}}
\newcolumntype{R}[1]{>{\raggedleft\let\newline\\\arraybackslash\hspace{0pt}}m{#1}}

\usepackage[english]{babel}
\usepackage{amsthm}
\usepackage{amssymb}

\newtheorem{theorem}{Theorem}
\newtheorem{lemma}{Lemma}

\theoremstyle{plain}
\newtheorem{corollary}{Corollary}

\theoremstyle{plain}

\theoremstyle{remark}

\usepackage{cite}
\ifCLASSINFOpdf
   \usepackage[pdftex]{graphicx}
\else
   or other class option (dvipsone, dvipdf, if not using dvips). graphicx
   \usepackage[dvips]{graphicx}
   \graphicspath{{../eps/}}
   \DeclareGraphicsExtensions{.eps}
\fi
\usepackage[normalem]{ulem}

\usepackage{amsmath}
\usepackage[cmintegrals]{newtxmath}
\usepackage{algorithmic}
\usepackage{multicol}
\usepackage{algorithm}

\usepackage{verbatim}

\usepackage{array}
\usepackage{makecell}
\usepackage[pagewise]{lineno}
\usepackage{upgreek}

\ifCLASSOPTIONcompsoc
  \usepackage[caption=false,font=normalsize,uabelfont=sf,textfont=sf]{subfig}
\else
  \usepackage[caption=false,font=footnotesize]{subfig}
\fi
\begin{document}

% \linenumbers
\setcounter{figure}{0}
\renewcommand{\figurename}{Fig.}
\renewcommand{\thefigure}{\arabic{figure}}
% paper title
% Titles are generally capitalized except for words such as a, an, and, as,
% at, but, by, for, in, nor, of, on, or, the, to and up, which are usually
% not capitalized unless they are the first or last word of the title.
% Linebreaks \\ can be used within to get better formatting as desired.
% Do not put math or special symbols in the title.
% \title{Sensing with OFDM Communication Signals: Performance Analysis, Constellation Shaping, and Experimental Validation}
\title{OFDM-ISAC over Data Payloads: MSE Analysis, Constellation Design, and Experimentation}
%\pagestyle{empty}
%
% author names and IEEE memberships
% note positions of commas and nonbreaking spaces ( ~ ) LaTeX will not break
% a structure at a ~ so this keeps an author's name from being broken across
% two lines.
% use \thanks{} to gain access to the first footnote area
% a separate \thanks must be used for each paragraph as LaTeX2e's \thanks
% was not built to handle multiple paragraphs
%

\author{Kawon~Han,~\IEEEmembership{Member,~IEEE,} Kaitao~Meng,~\IEEEmembership{Member,~IEEE,}  Alexandra Chatzicharistou,~\IEEEmembership{Graduate Student Member,~IEEE,} and Christos~Masouros,~\IEEEmembership{Fellow,~IEEE}

\thanks{An earlier version of this paper is accepted in part at the 2026 IEEE International Conference on Communications \cite{han2025constellation}

K. Han is with the Department of Electrical Engineering, Ulsan National Institute of Science and Technology (UNIST), Ulsan, South Korea (emails:kawon.han@unist.ac.kr). 

K. Meng is with the Department of Electrical and Electronic Engineering, University of Manchester, Manchester, UK (emails:kaitao.meng@manchester.ac.uk). 

A. Chatzicharistou and C. Masouros are with the Department of Electronic and Electrical Engineering, University College London, London, UK (emails: \{a.chatzicharistou, c.masouros\}@ucl.ac.uk).}}

\maketitle

% As a general rule, do not put math, special symbols or citations
% in the abstract or keywords.
\begin{abstract}
Orthogonal frequency division multiplexing (OFDM) is a key waveform for integrated sensing and communication (ISAC) systems due to its high spectral efficiency and inherent compatibility with modern wireless standards. However, its fundamental estimation-theoretic sensing performance under random data modulation remains largely unexplored. This paper presents a unified and explicit performance analysis of OFDM-based ISAC systems for multi-target range estimation, focusing on the distinct impacts of the modulation constellation on the sensing performance. We develop a comprehensive estimation-theoretic framework to characterize the range estimation mean-square error (MSE) for both matched filtering (MF) and reciprocal filtering (RF) sensing receiver architectures. Our theoretical analysis reveals that in multi-target and clutter-rich environments, the sensing performance of the MF receiver is fundamentally limited by the fourth-order moment (kurtosis) of the constellation, which determines the data-dependent sidelobe interference level. In contrast, the RF receiver eliminates such interference at the cost of noise enhancement, with its performance governed by the inverse second-order moment of the constellation. Building on these closed-form MSE derivations, we propose a sensing-receiver specific geometric constellation shaping (GCS) framework. By jointly optimizing the constellation geometry based on the minimum Euclidean distance (MED) and receiver-dependent sensing metrics, we enable a flexible trade-off between communication reliability and sensing precision. The proposed analytical framework and GCS designs are validated through extensive numerical simulations and a hardware-based proof-of-concept experiment. Our results demonstrate that the proposed constellation shaping provides significant performance gains and facilitates a tailored sensing and communication trade-off across different receiver architectures in practical over-the-air implementations.
\end{abstract}

% Note that keywords are not normally used for peer-reviewed papers.
\begin{IEEEkeywords}
Cramér-Rao bound (CRB), integrated sensing and communication (ISAC), mean-square error (MSE), orthogonal frequency division multiplexing (OFDM), constellation shaping.
\end{IEEEkeywords}

% For peer review papers, you can put extra information on the cover
% page as needed:
% \ifCLASSOPTIONpeerreview
% \begin{center} \bfseries EDICS Category: 3-BBND \end{center}
% \fi
%
% For peerreview papers, this IEEEtran command inserts a page break and
% creates the second title. It will be ignored for other modes.
\IEEEpeerreviewmaketitle

\section{Introduction}
% The very first letter is a 2 line initial drop letter followed
% by the rest of the first word in caps.
% 
% form to use if the first word consists of a single letter:
% \IEEEPARstart{A}{demo} file is ....
% 
% form to use if you need the single drop letter followed by
% normal text (unknown if ever used by the IEEE):
% \IEEEPARstart{A}{}demo file is ....
% 
% Some journals put the first two words in caps:
% \IEEEPARstart{T}{his demo} file is ....
% 
% Here we have the typical use of a "T" for an initial drop letter
% and "HIS" in caps to complete the first word.
\IEEEPARstart{T}he paradigm of Integrated Sensing and Communication (ISAC) represents a fundamental shift from the traditional coexistence of radar and communication systems toward unified dual-functional radar-communication (DFRC) architectures \cite{liu2022integrated, hassanien2016signaling}. By sharing a common hardware platform and spectral rescatterers, ISAC achieves significant gains in hardware and spectral efficiency while fostering synergistic interactions between sensing and communication (S\&C) functionalities \cite{luo2025isac}. Recent research in ISAC design has transitioned from radar-centric or communication-centric designs toward sophisticated joint signal processing and multiple-input multiple-output (MIMO) precoding strategies \cite{zhang2021overview, liu2020joint0, ma2021spatial, han2025next,meng2025network}. These developments have established the theoretical feasibility of ISAC and provided flexible mechanisms for balancing the inherent performance trade-offs between the two functions.

As ISAC enters the practical deployment phase of sixth-generation (6G) wireless networks, there is an intensifying demand for standard-compatible solutions that can be integrated into existing wireless frameworks \cite{liu2025sensing, prasad2004ofdm}. A prominent approach involves leveraging established reference signals, such as preambles and pilots, simultaneously for communication channel estimation and radar sensing \cite{zhu2023pilot, hua2024integrated, kumari2017ieee}. While utilizing reference signals minimizes communication overhead and integration complexity, the resulting sensing performance is fundamentally constrained by the limited time, frequency, and power rescatterers typically allocated to these signals. Consequently, reliance on reference signals alone may fail to meet the stringent accuracy and resolution requirements of emerging ISAC applications, such as high-mobility vehicular monitoring, drone detection, and environmental mapping for digital twin systems \cite{cui2021integrating}.

Beyond the exclusive use of reference signals, communication data payloads can be exploited for radar sensing, which is a concept extensively studied within the framework of communication-centric OFDM-ISAC \cite{sturm2011waveform, han2023sub, keskin2025fundamental, hu2024joint}. Traditionally, this was explored through opportunistic passive radar sensing using Digital Video Broadcasting-Terrestrial (DVB-T) signals \cite{berger2010signal, palmer2012dvb}. Subsequent studies transitioned toward mono-static configurations, demonstrating the potential of data-embedding waveforms for radar parameter estimation \cite{sturm2011waveform}. However, these early works largely treated communication signals as stochastic rather than controllable rescatterers, as data payloads are modulated by random information bits according to a certain signal constellation.

Although some early studies analyzed the impact of randomly modulated signals on radar sensing performance, recent efforts have rigorously investigated the design of communication-centric ISAC signaling by exploiting temporal and spectral degrees of freedom (DoF). The work in \cite{liu2025cp} demonstrated that an orthogonal frequency-division multiplexing (OFDM) waveform with a cyclic prefix (CP) outperforms other waveforms, such as single-carrier modulation and orthogonal time frequency space (OTFS), in terms of the ranging sidelobes of the single-input single-output (SISO) ambiguity function (AF). Moreover, the established link between communication signal properties and AF characteristics has motivated the design of pulse shaping, modulation constellations, and power allocation strategies, enabling flexible S\&C trade-offs in communication-centric ISAC systems \cite{liao2025pulse, yang2024constellation, zhang2025optimal}. 

Interestingly, the impact of the signal constellation on OFDM-based ISAC utilizing data payloads has been analyzed in terms of ranging sidelobe levels (SLLs) in \cite{liu2025uncovering}, which showed that unit-amplitude constellations, such as phase-shift keying (PSK), yield optimal sensing performance. Building on this analysis, the work in \cite{du2024reshaping} proposed probabilistic constellation shaping to flexibly balance communication mutual information (MI) and the SLLs of the auto-correlation function (ACF). Furthermore, by adopting target detection performance as a sensing metric, a joint probabilistic and geometric constellation shaping framework was introduced in \cite{geiger2025joint, geiger2025constellation}. However, sensing performance analyses based on the AF are fundamentally limited; they primarily characterize the performance of matched filtering (MF) receivers. In practice, radar receiver architectures may employ mismatched filtering (MMF) to suppress the deleterious sidelobes induced by the inherent randomness of communication data \cite{mcaulay1971optimal}.

In the context of OFDM-ISAC systems, an MMF technique known as reciprocal filtering (RF) has been employed to mitigate the range sidelobes induced by random communication payloads \cite{wojaczek2018reciprocal, rodriguez2023supervised, han2025sensing}. Although RF effectively eliminates data-dependent sidelobes through element-wise division in the frequency domain for CP-OFDM ISAC systems, the receiver simultaneously experiences a signal-to-noise ratio (SNR) loss due to noise amplification, which is known for a fundamental trade-off inherent to MMF architectures. The seminal work in \cite{wojaczek2018reciprocal} established the analytical relationship between the modulation constellation geometry and the resulting SNR loss of the RF receiver. Subsequently, the study in \cite{keskin2025fundamental} provided a comparative performance analysis of MF, RF, and linear minimum mean-square error (LMMSE) receivers, characterizing the achievable dynamic range under varying input SNR conditions. Notably, it was demonstrated that the LMMSE receiver, as a representative MMF approach, maximizes the receiver dynamic range when the input SNR is known a priori.

\begin{figure}[t!]
    \centering
    {\includegraphics[width=0.425\textwidth]{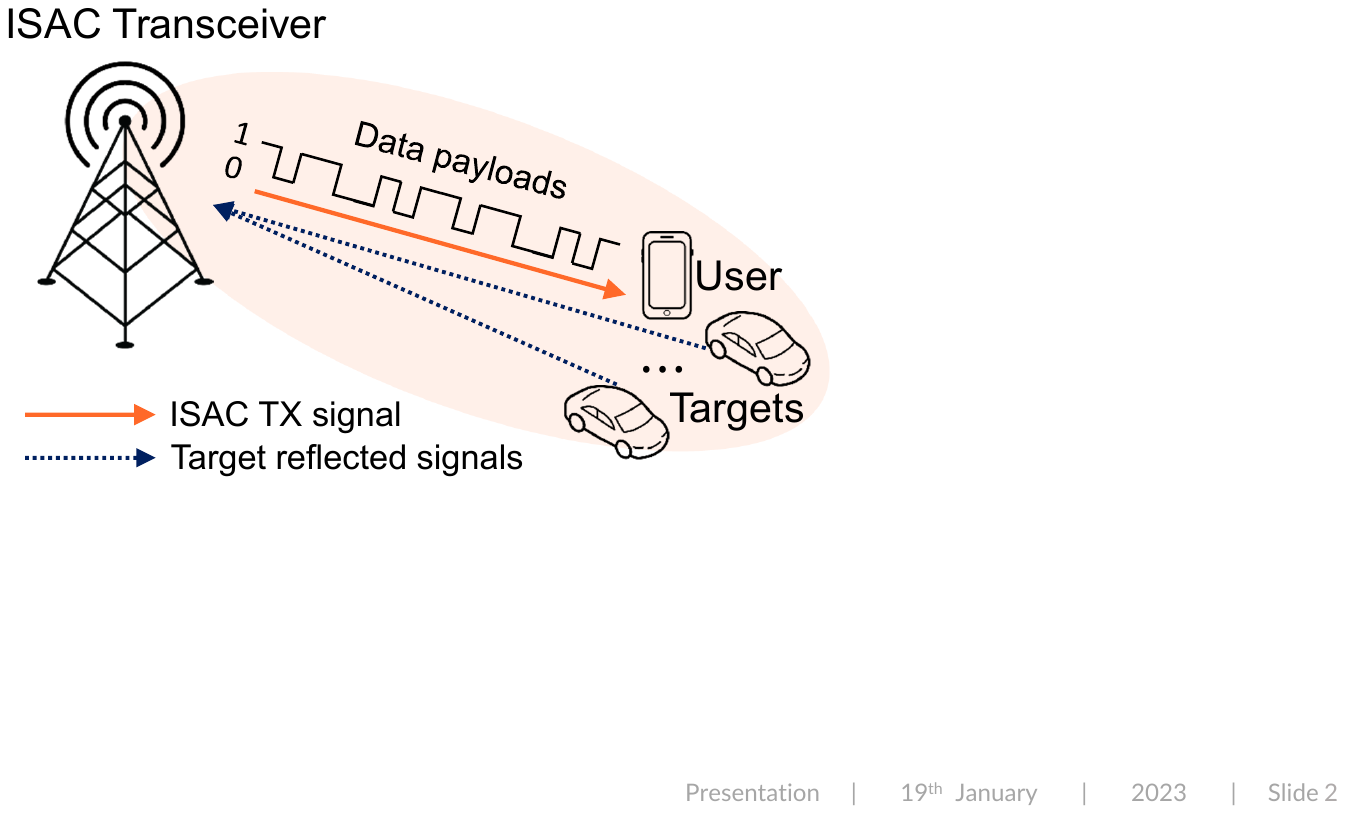}}
    \caption{Communication-centric ISAC over data payloads in multi-target environment.}
    \label{f1}
\end{figure}

Despite these advancements in characterizing OFDM-based sensing using data payloads, an explicit performance characterization for multi-target ranging remains largely unexplored. Specifically, the mechanisms, by which multiple targets and environmental clutter interfere with parameter estimation as shown in Fig. \ref{f1}, differ fundamentally between MF and RF receivers, yet a unified analytical framework for their comparison is still lacking. Furthermore, existing ISAC geometric constellation shaping strategies are primarily tailored for the conventional MF receiver \cite{xu2024experimental, hu2025learning, du2024reshaping}, which may be inherently suboptimal for RF-based processing. Consequently, there is a compelling need for a receiver-specific ISAC constellation design that accounts for the unique noise and interference characteristics of MMF architectures.

In this paper, we present a unified and explicit performance analysis of OFDM-based ISAC utilizing data payloads within an estimation-theoretic framework. Our analysis addresses receiver-specific impacts of the modulation constellation, as well as the deleterious effects of multi-target interference and clutter. Leveraging range estimation mean-square error (MSE) derivations for both MF and RF receivers, we propose a geometric constellation shaping (GCS) framework that enables flexible, receiver-specific S\&C trade-offs. \textbf{The main contributions of this paper are summarized as follows}:
\begin{itemize}
    \item We develop an estimation-theoretic sensing performance framework for CP-OFDM-based ISAC. We demonstrate that the expected Cram\'{e}r--Rao bound (CRB) over random data for target range estimation remains invariant to the choice of signal constellation. By contrast, we derive closed-form ranging MSE expressions for MF and RF receivers, explicitly quantifying the multi-target range estimation performance depending on the modulation constellation geometry.
    
    \item We analyze the robustness of MF and RF receivers against multi-target interference and clutter. Our results reveal that the range estimation performance of the MF receiver degrades in multi-target scenarios when the fourth-order moment of the constellation is larger than unity. In contrast, the RF receiver performance is governed by the inverse second-order moment of the constellation and remains independent of the number of targets or clutter scatterers.
    
    \item On this theoretical basis, we introduce a receiver-specific geometric constellation shaping framework. By employing the minimum Euclidean distance (MED) as the communication performance metric, the proposed joint optimization of constellation geometry facilitates flexible S\&C performance trade-offs specifically tailored to MF and RF sensing receiver architectures.
    
    \item The proposed analytical framework and constellation shaping strategies are validated via numerical simulations and a hardware-based proof-of-concept (PoC) experiment. Specifically, we provide experimental evidence of the trade-off between range estimation accuracy and communication throughput, thereby bridging the gap between theoretical analysis and practical over-the-air implementation.
\end{itemize}

\begin{table*}[t!]
    \centering
    \fontsize{8.3}{14}\selectfont
    \caption{Values of $\mu_4$ and $\nu_{-2}$ for PSK, $M$-QAM, and $M$-APSK modulation schemes. The APSK modulation formats with corresponding code rates are defined in the DVB standard~\cite{etsi2020dvbs2x}} 
        \label{table2}
    \begin{tabular}{>{\centering\arraybackslash} m{11em} | >{\centering\arraybackslash} m{3em}  | >
    {\centering\arraybackslash} m{3em}  | >
    {\centering\arraybackslash} m{3em}  | >{\centering\arraybackslash} m{3.5em} | >{\centering\arraybackslash} m{3.5em} | >{\centering\arraybackslash} m{3.5em} | >{\centering\arraybackslash} m{3.5em} | >{\centering\arraybackslash} m{3.5em} | >
    {\centering\arraybackslash} m{3.5em} | >{\centering\arraybackslash} m{3.5em}}
        \toprule
             & PSK & 16QAM & 32QAM & 64QAM & 128QAM & 256QAM & 16APSK (2/3) & 32APSK (2/3) & 64APSK (7/9) & 64APSK (128/180) \\ 
            \hline
        \midrule
        $\mu_4 = \frac{1}{M}\sum_{m=1}^{M} |s_m|^4$   & 1 & 1.32 & 1.31 & 1.38 & 1.34 & 1.40 & 1.25 & 1.41 & 1.64 & 1.39   \\ \hline
        $\nu_{-2} = \frac{1}{M}\sum_{m=1}^{M} |s_m|^{-2}$  & 1 & 1.89 & 2.23 & 2.69 & 2.98 & 3.44 & 2.50 & 3.23 & 2.55 & 2.13 \\ 
        \bottomrule
    \end{tabular}
\end{table*}

The remainder of this article is organized as follows. Section \ref{Sec::1} establishes the system and signal models for OFDM-based ISAC. Section \ref{Sec::RX} provides the core performance analysis, including the derivation of closed-form range estimation MSE expressions for both MF and RF processing architectures. Building on these analytical foundations, Section \ref{Sec::3} develops the receiver-specific ISAC geometric constellation shaping framework. Section \ref{Sec::4} presents extensive numerical simulation results, followed by Section \ref{Sec::5}, which validates the proposed framework through over-the-air experimental results from a PoC implementation. Finally, Section \ref{Sec::6} provides concluding remarks.

\textit{Notations}: Boldface lower-case and upper-case symbols denote vectors and matrices, respectively. $\mathbf{A} \in \mathbb{C}^{N \times M}$ and $\mathbf{B} \in \mathbb{R}^{N \times M}$ represent a complex-valued $N \times M$ matrix and a real-valued $N \times M$ matrix, respectively. The operators $(\cdot)^{T}$, $(\cdot)^{H}$, and $(\cdot)^{*}$ represent the transpose, Hermitian transpose, and conjugate, respectively. $\text{diag}(\mathbf{a})$ denotes a diagonal matrix whose diagonal elements are given by the vector $\mathbf{a}$. The operators $\odot$ and $\oslash$ represent the Hadamard (element-wise) product and the element-wise division, respectively. $\mathbb{E}[\cdot]$ denotes the statistical expectation operator. Finally, $\Re(\cdot)$ and $\Im(\cdot)$ denote the real and imaginary parts of a complex number, respectively.

\section{System Model}\label{Sec::1}
\subsection{Transmit Signal Model}
Consider an ISAC transmitter utilizing an OFDM waveform with $N$ subcarriers. The transmit signal is modulated with communication symbols drawn from a constellation set $\mathcal{S}$, denoted by the vector $\mathbf{x} = [x_0, x_1, \dots, x_{N-1}]^T$, where $x_n \in \mathcal{S}$ for $n \in \{0, 1, \dots, N-1\}$. Without loss of generality, we assume the constellation is normalized to zero mean and unit average power, such that $\mathbb{E}[|x_n|^2] = 1$. To mitigate inter-symbol interference (ISI) and maintain the orthogonality of subcarriers in multipath environments, a CP of length $N_{\text{CP}}$ is prepended to each OFDM symbol. The addition of the CP also facilitates frequency-domain processing at the sensing receiver (RX), transforming the linear convolution of the channel into a circular one, which is essential for efficient parameter estimation in OFDM radar.

The fundamental statistical properties of the constellation $\mathcal{S}$ governing the ISAC performance are defined as follows:
\begin{align}
    \mathbb{E}\left[|x_n|^4\right] = \mu_{4}, \;\;\; \mathbb{E}\left[|x_n|^{-2}\right] = \nu_{-2}, \quad \forall x_n \in \mathcal{S}, \label{eq1}
\end{align}
where $\mu_{4}$ represents the fourth-order moment (kurtosis) of the constellation, and $\nu_{-2}$ denotes the inverse second-order moment \cite{wojaczek2018reciprocal}. For an $M$-ary constellation, these moments are respectively calculated as 
\begin{align}
    \mu_{4} & = \frac{1}{M}\sum_{m=1}^{M} |x_m|^4, \\
    \nu_{-2} & = \frac{1}{M}\sum_{m=1}^{M} |x_m|^{-2}.
\end{align}

The values of $\mu_4$ and $\nu_{-2}$ for various modulation schemes, including phase-shift keying (PSK), $M$-quadrature amplitude modulation (QAM), and $M$-ary amplitude phase-shift keying (APSK), are summarized in Table \ref{table2}. Note that for unit-amplitude constellations like PSK, both $\mu_4$ and $\nu_{-2}$ are equal to unity. However, for multi-amplitude and higher-order modulation formats such as $M$-QAM and those defined in the DVB standard \cite{etsi2020dvbs2x}, these values vary significantly based on the constellation geometry. As will be demonstrated in Section \ref{Sec::RX}, these statistical parameters play a decisive role in determining the sensing performance of MF and RF receivers. This data-embedding signal serves a dual purpose, facilitating simultaneous information transfer to a communication user and radar sensing via backscattered signals from targets of interest.

\subsection{Sensing System Model}
We consider a sensing scenario involving $K$ discrete scatterers located at distinct ranges. These scatterers are assumed to be sufficiently separated in the delay domain to be resolvable, a condition generally satisfied in conventional radar systems. It is important to note that the total number of scatterers $K$ encompasses both targets of interest and environmental clutter scatterers. Furthermore, the scatterers are assumed to be slowly moving such that the Doppler shift is negligible relative to the subcarrier spacing, thereby not suffering from inter-carrier interference (ICI) \cite{keskin2021mimo}. We further assume that all targets are positioned within a range corresponding to the CP length, which ensures the absence of ISI and maintains the circularity of the channel convolution \cite{xu2025does}. 

Following CP removal and an $N$-point fast Fourier transform (FFT), the frequency-domain received signal at the sensing RX is modeled as:
\begin{align}
    \mathbf{y} = \mathbf{a}^T \mathbf{H} \mathbf{X} + \mathbf{z}, \label{RX_signal}
\end{align}
where $\mathbf{a} = [\alpha_{1}, \alpha_{2}, \dots, \alpha_{K}]^T \in \mathbb{C}^{K \times 1}$ is the complex amplitude vector incorporating the path loss and the radar cross-section (RCS) of each scatterer $k$. The target delay channel matrix is expressed as $\mathbf{H} = [\mathbf{h}(\tau_1), \mathbf{h}(\tau_2), \dots, \mathbf{h}(\tau_K)]^T \in \mathbb{C}^{K \times N}$, where $\tau_{k}$ is the round-trip time-of-flight (TOF) from the ISAC TX to scatterer $k$ and back to the receiver. The range steering vector is defined as $\mathbf{h}(\tau) = [1, e^{-j2 \pi \Delta f \tau}, \dots, e^{-j2 \pi (N-1)\Delta f \tau}]^T \in \mathbb{C}^{N \times 1}$ with subcarrier spacing $\Delta f = B/N$. The transmitted signal matrix $\mathbf{X}$ is a diagonalized representation of $\mathbf{x}$, given by $\mathbf{X} = \text{diag}(\mathbf{x})$. Finally, $\mathbf{z}$ represents the additive white Gaussian noise (AWGN) at the sensing receiver, following $\mathbf{z} \sim \mathcal{CN}(\mathbf{0}, \sigma^2 \mathbf{I}_N)$.

\section{Estimation-Theoretic Sensing Performance Analysis}\label{Sec::RX}
In this section, we characterize the delay estimation performance of OFDM-based ISAC systems under random communication signaling. We first establish a theoretical benchmark using the expected CRB and subsequently derive the MSE for specific receiver architectures, accounting for the impact of the signal constellation.

\subsection{Expected CRB for Delay Estimation}
Assuming the targets are sufficiently resolved in the delay domain, we analyze the likelihood function for a single target to provide a tractable basis for investigating how specific constellation geometry affects ranging estimation accuracy. Under the assumption of AWGN, the log-likelihood function for the delay parameter $\tau$ of the $k$-th target is given by
\begin{equation}
    \mathcal{L}(\tau_k) = -\frac{1}{\sigma^2} \left\| \mathbf{y} - \alpha_k \mathbf{h}(\tau_k)^T \mathbf{X} \right\|^2 + c_0,
\end{equation}
where $c_0$ is a constant term independent of the unknown parameters. For a single-target scenario or well-separated multiple targets, the Fisher information (FI) regarding the delay $\tau_k$ is obtained by taking the second derivative of the log-likelihood with respect to $\tau_k$. We first define the signal model for the $k$-th target as $\mathbf{s}(\tau_k) = \alpha_k \mathbf{h}(\tau_k)^T \mathbf{X}$. Then, it is given by
\begin{equation}
    \mathcal{I}(\tau_k) = \frac{2}{\sigma^2} \Re \left\{ \left( \frac{\partial \mathbf{s}(\tau_k)}{\partial \tau_k} \right)^H \left( \frac{\partial \mathbf{s}(\tau_k)}{\partial \tau_k} \right) \right\}. \label{FI}
\end{equation}
The partial derivative of the signal vector with respect to the delay $\tau_k$ simplifies to the vector $\dot{\mathbf{s}}(\tau_k)$ with elements $s'_n = -j2\pi n \Delta f \alpha_k e^{-j2\pi n \Delta f \tau_k} x_n$. Substituting this into \eqref{FI}, we obtain
\begin{align}
    \mathcal{I}(\tau_k) &= \frac{2}{\sigma^2} \sum_{n=0}^{N-1} | -j2\pi n \Delta f \alpha_k e^{-j2\pi n \Delta f \tau_k} x_n |^2 \\
    &=  \frac{8 \pi^2 \Delta f^2 |\alpha_k|^2}{\sigma^2} \sum_{n=0}^{N-1} n^2 |x_n|^2.
\end{align}

The CRB, which provides a lower bound on the variance of any unbiased estimator, is defined as the inverse of the FI:
\begin{equation}
\mathrm{CRB}_{\tau_k} = \frac{1}{\mathcal{I}(\tau_k)}= \frac{\sigma^2}{8 \pi^2 \Delta f^2 |\alpha_k|^2 \sum_{n=0}^{N-1} n^2 |x_n|^2}. \label{CRB}
\end{equation}
Note that \eqref{CRB} is a random variable that depends on the instantaneous realization of the communication symbols $x_n$. To provide a fundamental sensing performance limit for the ISAC system, we define the expected CRB under random signaling in the following theorem. It is important to note that the instantaneous CRB in \eqref{CRB} is itself a random quantity, as it depends on the specific realization of the randomly modulated data symbols. In general, the expectation of the CRB is not equal to the CRB evaluated at the expected FI, due to the nonlinearity introduced by the inversion operation.

\begin{theorem}\label{theo1}
The expected CRB for delay estimation in an OFDM-based ISAC system, modulated by a constellation with kurtosis $\mu_4$, is approximately expressed as
\begin{align}
    \mathbb{E}[\mathrm{CRB}_{\tau_k}] \approx \frac{\sigma^2}{8 \pi^2 \Delta f^2 |\alpha_k|^2}
    \left( \frac{3}{N^3} + \frac{27(\mu_4 - 1)}{5N^4} \right). \label{Theo_eq0}
\end{align}
\end{theorem}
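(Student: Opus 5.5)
We approximate the expectation of the inverse of the random quantity $S=\sum_{n=0}^{N-1} n^2 |x_n|^2$ that appears in the denominator of \eqref{CRB}. The tool is a second-order Taylor (delta-method) expansion of $g(S)=1/S$ around its mean $\bar S=\mathbb{E}[S]$:
\begin{align}
\mathbb{E}\left[\frac{1}{S}\right] \approx \frac{1}{\bar S} + \frac{\mathrm{Var}(S)}{\bar S^{3}}.
\end{align}
The first-order term vanishes in expectation. The constant prefactor $\sigma^2/(8\pi^2\Delta f^2|\alpha_k|^2)$ is deterministic and factors out, so the result reduces to computing $\bar S$ and $\mathrm{Var}(S)$ in terms of the constellation moments in \eqref{eq1}.

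The first step is the mean. Since $\mathbb{E}[|x_n|^2]=1$,
\begin{align}
\bar S=\sum_{n=0}^{N-1} n^2=\frac{(N-1)N(2N-1)}{6}\approx \frac{N^3}{3}
\end{align}
to leading order in $N$. This gives the term $1/\bar S\approx 3/N^3$.

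The second step is the variance. The symbols $x_n$ are i.i.d. across subcarriers, and $\mathrm{Var}(|x_n|^2)=\mathbb{E}[|x_n|^4]-(\mathbb{E}[|x_n|^2])^2=\mu_4-1$. Hence
\begin{align}
\mathrm{Var}(S)=(\mu_4-1)\sum_{n=0}^{N-1} n^4 \approx (\mu_4-1)\frac{N^5}{5},
\end{align}
using the leading term of the Faulhaber sum $\sum n^4 = N^5/5+O(N^4)$. Substituting gives
\begin{align}
\frac{\mathrm{Var}(S)}{\bar S^3}\approx \frac{(\mu_4-1)N^5/5}{N^9/27}=\frac{27(\mu_4-1)}{5N^4}.
\end{align}
Multiplying by the prefactor then yields \eqref{Theo_eq0}.

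The main obstacle is justifying the truncation of the Taylor series, since this is what makes the result "approximate." I would argue it by concentration. The relative fluctuation is $\sqrt{\mathrm{Var}(S)}/\bar S = O(N^{-1/2})$. Higher central moments of $S$ scale so that the $m$-th term $\mathbb{E}[(S-\bar S)^m]/\bar S^{m+1}$ is $O(N^{-3-\lceil m/2\rceil})$; in particular the third-order term is $O(N^{-5})$. These terms are therefore negligible relative to the retained $N^{-4}$ correction. One technical caveat is that $S$ can be bounded away from zero for any constellation with a finite minimum amplitude, so $1/S$ is bounded and the expansion is legitimate. I would also note that replacing the exact finite sums by their leading-order terms introduces relative errors of $O(1/N)$ in each term. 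Strictly, this means the $O(N^{-4})$ correction from the exact $\bar S$ is of the same order as the $\mu_4$ term, so the approximation should be read as capturing the leading constellation-dependent correction for large $N$. This is consistent with the "$\approx$" in the theorem, and I would state that interpretation explicitly.
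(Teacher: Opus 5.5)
Your proposal is correct and follows the paper's proof essentially step for step: a second-order delta-method expansion of $\mathbb{E}[1/W]$ about its mean, the mean $\sum n^2$ and variance $(\mu_4-1)\sum n^4$ from the i.i.d. symbols, and leading-order power-sum asymptotics giving $3/N^3 + 27(\mu_4-1)/(5N^4)$. Your additional remarks are accurate refinements that the paper's ``first-order asymptotic approximation'' passes over: the $O(N^{-3-\lceil m/2\rceil})$ scaling of the truncated terms, and the observation that the exact $\sum n^2$ contributes a constellation-independent $9/(2N^4)$ term of the same order as the retained $\mu_4$ term.
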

\renewcommand\qedsymbol{$\blacksquare$}
\begin{proof}
Please refer to Appendix \ref{proof_theo1}.
\end{proof}

\noindent \textbf{Remark 1}: From Theorem \ref{theo1}, it is observed that the constellation geometry, characterized by $\mu_4$, appears in the second term within the brackets of \eqref{Theo_eq0}. This term vanishes as $N$ increases, leaving the expected CRB to be governed by the first term, which is constellation-invariant. This suggests that the CRB alone may not fully capture the practical performance degradation induced by random data payloads. However, it provides a fundamental insight: theoretically, an optimal delay estimator can achieve a performance limit nearly independent of the constellation, provided the estimator is properly designed. Nevertheless, we show below that this insight does not capture the realistic performance of practical sensing receivers such as MF and RF. To further investigate the practical limitations, we evaluate the MSE of delay estimation for these specific ISAC receiver architectures in the following section.

\subsection{Delay Estimation MSE with Matched Filtering Receiver}
In this subsection, we derive the estimation error for the delay $\tau_k$ of the $k$-th scatterer under MF processing. At the sensing RX, the MF operation is performed by multiplying the received signal $\mathbf{y}$ by the conjugate of the transmitted symbols $\mathbf{X}^H$. The resulting output, $\mathbf{y}_{\text{MF}} = \mathbf{y}\mathbf{X}^H$, is given by
\begin{align}
    \mathbf{y}_{\text{MF}} = \mathbf{a}^T \mathbf{H} |\mathbf{X}|^2 + \mathbf{z}_{\text{MF}},
\end{align}
where $\mathbf{z}_{\text{MF}} = \mathbf{z}\mathbf{X}^H$. Given the unit-variance assumption of the constellation, $\mathbb{E}[|x_n|^2]=1$, the filtered noise $\mathbf{z}_{\text{MF}}$ maintains the same statistical characteristics as the original AWGN vector $\mathbf{z}$. 

We consider a general delay estimator that selects atoms from a dictionary $\mathcal{A}=\{\mathbf{h}(\tau) \; | \; \tau \in \mathcal{T}\}$, where $\mathcal{T}$ represents a delay candidate set with sufficient resolution. A common and computationally efficient implementation of such an estimator is the inverse discrete Fourier transform (IDFT). The delay estimate is formally defined as
\begin{align} \label{estimator}
    \hat{\tau} = \arg \max_{\tau \in \mathcal{T}} \left| \mathbf{h}^H(\tau) \mathbf{y}_{\text{MF}}^T \right|.
\end{align}
To analyze the estimator performance, let $s_{\text{MF}}(\tau) = \mathbf{h}^H(\tau) \mathbf{y}_{\text{MF}}^T$ denote the correlation output, which can be expanded as
\begin{equation}
    s_{\text{MF}}(\tau) = \sum_{k=1}^K \sum_{n=0}^{N-1} \alpha_k |x_n|^2 e^{j2\pi n \Delta f (\tau - \tau_k)} + \sum_{n=0}^{N-1} x_n^* z_n e^{j2\pi n \Delta f \tau}.
\end{equation}
The objective function for the estimation is defined as $f(\tau) = |s_{\text{MF}}(\tau)|^2$. The following lemma provides a general approximation for the delay estimation error in terms of the derivatives of $f(\tau)$ around the true delay.

\begin{lemma} \label{lemma1}
    Assuming that the $k$-th scatterer is the target of interest and $\tau_k$ represents its true delay, the estimation error under the assumption of a high SNR can be approximated using the first-order optimality condition as
    \begin{equation}
        \hat{\tau}_k - \tau_k = - \frac{\dot{f}(\tau_k)}{\ddot{f}(\tau_k)}, \label{MSE1}
    \end{equation}
    where $\dot{f}(\tau_k)$ and $\ddot{f}(\tau_k)$ denote the first and second derivatives of $f(\tau_k)$ in terms of $\tau_k$.
\end{lemma}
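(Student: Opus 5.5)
The plan is to treat \eqref{MSE1} as a one-step Newton (first-order Taylor) approximation of the stationarity condition satisfied by the estimator in \eqref{estimator}. Since $\hat{\tau}_k$ maximizes $|s_{\text{MF}}(\tau)|$ over a sufficiently fine candidate set $\mathcal{T}$, we model $\mathcal{T}$ as effectively continuous in a neighbourhood of $\tau_k$. Because $|\cdot|$ and $|\cdot|^2$ share maximizers, $\hat{\tau}_k$ then also maximizes $f(\tau)=|s_{\text{MF}}(\tau)|^2$. Moreover, $f$ is smooth in $\tau$: it is a finite trigonometric polynomial in $\tau$, so all its derivatives exist. An interior local maximizer therefore satisfies the first-order optimality condition $\dot{f}(\hat{\tau}_k)=0$.

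The first step is to argue that, under the high-SNR assumption and resolvability of the $K$ scatterers, the relevant maximizer lies in the mainlobe around $\tau_k$. In the noiseless, interference-free case the mainlobe term $\alpha_k\sum_n |x_n|^2 e^{j2\pi n\Delta f(\tau-\tau_k)}$ peaks exactly at $\tau=\tau_k$. The noise term and the sidelobe contributions of the other scatterers $k'\neq k$ are small perturbations in this regime. Consequently the error $\delta=\hat{\tau}_k-\tau_k$ is small, and within the mainlobe $\ddot{f}(\tau_k)<0$ with magnitude bounded away from zero, dominated by $-2|\alpha_k|^2(2\pi\Delta f)^2\big(\sum_n|x_n|^2\big)\sum_n n^2|x_n|^2$ plus lower-order corrections.

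The second step expands $\dot{f}$ around $\tau_k$: $0=\dot{f}(\hat{\tau}_k)=\dot{f}(\tau_k)+\ddot{f}(\tau_k)\,\delta+\tfrac12\dddot{f}(\xi)\,\delta^2$ for some $\xi$ between $\tau_k$ and $\hat{\tau}_k$. The third derivative is uniformly bounded, by a constant times $N^3$ times powers of the signal amplitude, since $f$ is a trigonometric polynomial of degree at most $N-1$. Hence the quadratic remainder is of higher order in $\delta$, which is small at high SNR. Dropping it and solving the linear equation gives $\delta\approx-\dot{f}(\tau_k)/\ddot{f}(\tau_k)$, which is \eqref{MSE1}.

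The main obstacle is justifying that the global argmax selected by \eqref{estimator} is the local mainlobe stationary point rather than a spurious peak. Such a spurious peak could be produced by noise or by the data-dependent sidelobes of the other scatterers. I would handle this as a standard outlier-free (asymptotic, high-SNR) regime argument. With resolvable delays the competing peaks sit at the other $\tau_{k'}$, and these are excluded by restricting to the mainlobe neighbourhood of $\tau_k$. The perturbation within that neighbourhood scales as $O(\sigma)$ plus sidelobe terms of order $\sqrt{N}$, while the mainlobe curvature scales as $N^3$ times $|\alpha_k|^2$ times $N$, so the perturbation cannot create a new maximum. Because the lemma is stated as an approximation, a heuristic statement of this regime suffices, rather than a rigorous probability bound.
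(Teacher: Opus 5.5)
Your proposal is correct and follows essentially the same route as the paper. The paper also linearizes $\dot f$ about $\tau_k$ (via a second-order Taylor expansion of $f$), imposes $\dot f(\hat\tau_k)=0$, and solves; your Lagrange remainder, smoothness remark and mainlobe/outlier discussion only make explicit what the paper assumes, namely a sharp global maximum near $\tau_k$ over a sufficiently fine $\mathcal{T}$.

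One side claim is off, though it does not affect this lemma. The subcarrier indices run over $0,\dots,N-1$ rather than being centred, so in $\ddot f(\tau_k)=2\Re\{\ddot s_{\text{MF}}s_{\text{MF}}^*\}+2|\dot s_{\text{MF}}|^2$ the signal part of the second term, $2|\alpha_k|^2(2\pi\Delta f)^2\bigl(\sum_n n|x_n|^2\bigr)^2\approx 2|\alpha_k|^2(2\pi\Delta f)^2N^4/4$, has the same order as the term you keep ($\approx 2|\alpha_k|^2(2\pi\Delta f)^2N^4/3$). It is therefore not a lower-order correction. Cauchy--Schwarz still gives $\ddot f(\tau_k)<0$, which is all the lemma needs.
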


\begin{proof}
    Assuming that the objective function $f(\tau)$ exhibits a distinct and sharp global maximum in the vicinity of the true delay $\tau_k$, a second-order Taylor expansion of $f(\tau)$ around $\tau_k$ is given by:
    \begin{align}
        f(\tau) \approx f(\tau_k) + \dot{f}(\tau_k)(\tau - \tau_k) + \frac{1}{2} \ddot{f}(\tau_k)(\tau - \tau_k)^2. \label{Taylor}
    \end{align}
    Differentiating \eqref{Taylor} with respect to $\tau$ yields:
    \begin{align}
        \dot{f}(\tau) \approx \dot{f}(\tau_k) + \ddot{f}(\tau_k)(\tau - \tau_k). \label{Taylor2}
    \end{align}
    At the estimate $\hat{\tau}_k$, the first-order optimality condition requires $\dot{f}(\hat{\tau}_k) = 0$. By evaluating \eqref{Taylor2} at $\tau = \hat{\tau}_k$ and solving for the estimation error $\hat{\tau}_k - \tau_k$, we arrive at the expression in \eqref{MSE1}, which completes the proof.
\end{proof}
The expression in Lemma \ref{lemma1} indicates that the estimation error is fundamentally governed by two factors: the fluctuation power of the first derivative $\dot{f}(\tau_k)$, arising from additive noise and mutual interference from the sidelobes of other scatterers, and the mainlobe sharpness, characterized by the curvature $\ddot{f}(\tau_k)$. We evaluate the resulting MSE by taking the expectation of the squared error in \eqref{MSE1} as:
\begin{equation} \label{MSE2}
    \mathbb{E}[(\hat{\tau}_k - \tau_k)^2] = \mathbb{E}\left[ \left( \frac{\dot{f}(\tau_k)}{\ddot{f}(\tau_k)} \right)^2 \right] 
    \approx \frac{\mathbb{E}[|\dot{f}(\tau_k)|^2]}{|\mathbb{E}[\ddot{f}(\tau_k)]|^2}.
\end{equation}
It is worth to note that this approximation is justified in the high-SNR regime, where the fluctuations of the second derivative $\ddot{f}(\tau_k)$ are negligible compared to its expected value, allowing for the decoupling of the numerator and denominator expectations. This approximation, which effectively assumes that the second derivative is highly concentrated around its mean, is widely adopted in the performance analysis of nonlinear estimators \cite{van2002optimum}. Such a statistical linearization is valid when the objective function exhibits a stable curvature and the estimator operates within a local neighborhood of the true parameter under sufficiently high SNR conditions.

To relate the MSE expression in \eqref{MSE2} to the correlation output $s_{\text{MF}}(\tau)$, we introduce the following lemma:
\begin{lemma} \label{lemma2}
    The delay estimation MSE for the $k$-th scatterer as defined in \eqref{MSE2} can be reformulated in terms of the derivatives of the correlation output as
    \begin{equation}
        \mathbb{E}[(\hat{\tau}_k - \tau_k)^2] = \frac{\mathbb{E}[|\dot{s}_{\text{MF}}(\tau_k)|^2]}{2|\mathbb{E}[\ddot{s}_{\text{MF}}(\tau_k)]|^2}. \label{MSE3}
    \end{equation}
\end{lemma}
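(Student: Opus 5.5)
We work directly from $f(\tau)=|s_{\text{MF}}(\tau)|^2 = s_{\text{MF}}(\tau)s_{\text{MF}}^*(\tau)$. Write $s=s_{\text{MF}}(\tau_k)$, $\dot s$ and $\ddot s$ for the values and derivatives at the true delay. Differentiating gives
\begin{align}
\dot f(\tau_k) &= 2\Re\{\dot s\, s^*\}, \\
\ddot f(\tau_k) &= 2\Re\{\ddot s\, s^*\} + 2|\dot s|^2 .
\end{align}
The plan is to substitute these into the numerator and denominator of \eqref{MSE2}. The simplifications come from the structure of $s$ in the high-SNR, well-separated regime.

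\textbf{Structure of the deterministic part.} Split $s_{\text{MF}}(\tau)$ into three pieces:
\begin{itemize}
\item the self term $\alpha_k\sum_n |x_n|^2 e^{j2\pi n\Delta f(\tau-\tau_k)}$;
\item the cross terms from the other $K-1$ scatterers;
\item the noise term.
\end{itemize}
At $\tau=\tau_k$ the self term equals $\alpha_k\sum_n|x_n|^2\approx \alpha_k N$, which dominates $s$. Hence $s\approx \alpha_k N$ up to small fluctuations.

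For the derivatives, the self term contributes $j2\pi\Delta f\,\alpha_k\sum_n n|x_n|^2$ to $\dot s$. To make this mainlobe contribution vanish, use a phase-centered indexing, i.e., symmetric $n$, which amounts to choosing the reference phase of $\mathbf{h}(\tau)$. The paper implicitly uses such a convention, and it does not affect $|f|$. Under it, $\dot s$ consists only of zero-mean fluctuation terms: the noise, the cross-target sidelobes, and the data-induced fluctuation $\sum_n n(|x_n|^2-1)$. Consequently $\mathbb{E}[|\dot s|^2]$ is second order in small quantities.

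The self term contributes $-(2\pi\Delta f)^2\alpha_k\sum_n n^2|x_n|^2$ to $\ddot s$. Its expectation is a large deterministic quantity of order $N^3$, aligned in phase with $s$, since both are proportional to $\alpha_k$.

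\textbf{Numerator.} We have
\begin{equation}
\mathbb{E}[\dot f^2] = 4\,\mathbb{E}[(\Re\{\dot s s^*\})^2].
\end{equation}
Replace $s$ by its dominant value $\alpha_k N$; its fluctuation multiplies the already small $\dot s$ and is neglected at high SNR. Then $\Re\{\dot s s^*\} = |\alpha_k|N\,\Re\{\dot s e^{-j\angle\alpha_k}\}$. The remaining task is to argue
\begin{equation}
\mathbb{E}[(\Re\{w\})^2] = \tfrac12\mathbb{E}[|w|^2], \qquad w = \dot s\, e^{-j\angle\alpha_k}.
\end{equation}
This holds when $w$ is circularly symmetric ($\mathbb{E}[w^2]=0$). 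The noise part is circular. The interference parts carry the independent phases of $\alpha_{k'}$ and $e^{j2\pi n\Delta f(\tau_k-\tau_{k'})}$, which we model as uniformly distributed or averaged out.

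The data-fluctuation self part is the delicate case, and it is the \textbf{main obstacle}. Because of the factor $j$, it is purely imaginary relative to $\alpha_k$. It therefore lies entirely in the imaginary direction and contributes nothing to $\Re\{w\}$. I would handle it by showing that this term drops out of $\dot f$ altogether, so the result is exact for the noise and interference contributions. I would then accept the factor $\tfrac12$ as the circularity approximation, noting the same high-SNR regime used for \eqref{MSE2}. This gives
\begin{equation}
\mathbb{E}[\dot f^2] \approx 2|\alpha_k|^2N^2\,\mathbb{E}[|\dot s|^2].
\end{equation}

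\textbf{Denominator.} The $2|\dot s|^2$ term is second order and is dropped relative to $2\Re\{\ddot s s^*\}$. Taking expectations and using phase alignment gives
\begin{equation}
\mathbb{E}[\ddot f] \approx 2\Re\{\mathbb{E}[\ddot s]\,\alpha_k^* N\} = -2|\alpha_k|N\,|\mathbb{E}[\ddot s]|,
\end{equation}
where the last step uses that $\mathbb{E}[\ddot s]$ is real-negative times $\alpha_k$. Hence
\begin{equation}
|\mathbb{E}[\ddot f]|^2 = 4|\alpha_k|^2N^2|\mathbb{E}[\ddot s]|^2 .
\end{equation}

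\textbf{Conclusion.} The ratio of the numerator and denominator is
\begin{equation}
\frac{2|\alpha_k|^2N^2\,\mathbb{E}[|\dot s|^2]}{4|\alpha_k|^2N^2|\mathbb{E}[\ddot s]|^2} = \frac{\mathbb{E}[|\dot s|^2]}{2|\mathbb{E}[\ddot s]|^2},
\end{equation}
which is \eqref{MSE3}.
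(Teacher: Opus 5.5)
Your argument follows the paper's proof step for step: the same formulas for $\dot f$ and $\ddot f$, the substitution $s_{\text{MF}}(\tau_k)\approx N\alpha_k$, the identity $\mathbb{E}[(\Re\{w\})^2]=\tfrac12\mathbb{E}[|w|^2]$, and dropping $2|\dot s|^2$ from the curvature. However, the step you flag as the main obstacle is resolved incorrectly as written.

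With your centered indexing, $\dot s$ still contains the mainlobe data fluctuation. After rotation by $e^{-j\angle\alpha_k}$ it reads $w_{\text{self}}=j2\pi\Delta f\,|\alpha_k|\sum_n n(|x_n|^2-1)$. You rightly note that $\Re\{w_{\text{self}}\}=0$. That means circularity cannot be applied to all of $\dot s$. Write $w=w_{\text{ni}}+w_{\text{self}}$, with $w_{\text{ni}}$ the noise-plus-cross-target part. What you actually get is $\mathbb{E}[\dot f^2]\approx 2|\alpha_k|^2N^2\,\mathbb{E}[|w_{\text{ni}}|^2]=2|\alpha_k|^2N^2\bigl(\mathbb{E}[|\dot s|^2]-\mathbb{E}[|w_{\text{self}}|^2]\bigr)$.

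The subtracted term, $(2\pi\Delta f)^2|\alpha_k|^2(\mu_4-1)\sum_n n^2$, is not negligible at high SNR. It is of the same order as the cross-target term, and its ratio to the noise contribution is $(\mu_4-1)|\alpha_k|^2/\sigma^2$, which grows with SNR. Leaving it inside $\mathbb{E}[|\dot s|^2]$ would predict an error floor proportional to $\mu_4-1$ even for $K=1$. That is wrong: for $K=1$ without noise the estimator is exact, since $\bigl|\sum_n|x_n|^2e^{j2\pi n\Delta f(\tau-\tau_k)}\bigr|$ peaks exactly at $\tau_k$ by the triangle inequality. So your final numerator identity $\mathbb{E}[\dot f^2]\approx 2|\alpha_k|^2N^2\,\mathbb{E}[|\dot s|^2]$ is false for $\dot s$ as you defined it, where it includes $\sum_n n(|x_n|^2-1)$.

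The fix is to read $\dot s_{\text{MF}}(\tau_k)$ in \eqref{MSE3} as only the cross-target and noise part $(j2\pi\Delta f)(\mathrm{C}_1+\mathrm{C}_2)$ of \eqref{firstDev1}, with the $j=k$ contribution removed. That is precisely how the paper defines it, and with that definition your argument closes.

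Your remark that the paper implicitly centers the index is also inaccurate. The paper keeps $n=0,\dots,N-1$, removes the $j=k$ term by asserting it vanishes, and later uses $\sum_{n=0}^{N-1}n^2\approx N^3/3$. Your centering is in fact what justifies the steps $s\approx N\alpha_k$ and discarding $2|\dot s|^2$. In the uncentered indexing, the mainlobe puts a deterministic term $j2\pi\Delta f\,\alpha_k\sum_n n|x_n|^2$ of order $N^2$ into $\dot s$. Then neither $2|\dot s|^2$ in $\ddot f$, nor the product of that term with the fluctuation of $s$ in $\dot f$, is negligible. Be aware, though, that with centering the $\sum n^2$ appearing downstream is the centered sum, approximately $N^3/12$, not $N^3/3$.
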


\begin{proof}
    The first derivative of the objective function $f(\tau) = |s_{\text{MF}}(\tau)|^2$ at the true delay $\tau_k$ is given by
    \begin{align}
        \dot{f}(\tau_k) = 2 \Re\{\dot{s}_{\text{MF}}(\tau_k) s_{\text{MF}}^*(\tau_k)\}.
    \end{align}
    Under the assumption of a high SNR and large $N$, the correlation peak at the true delay is dominated by the target signal component, such that $s_{\text{MF}}^*(\tau_k) \approx N\alpha_k^*$. Then, the expectation of the squared derivative is given by
    \begin{align}
        \mathbb{E}[|\dot{f}(\tau_k)|^2] &\approx 4N^2|\alpha_k|^2 \mathbb{E}\left[\left(\Re\{\dot{s}_{\text{MF}}(\tau_k)\}\right)^2\right] \nonumber \\
        &= 2N^2|\alpha_k|^2 \mathbb{E}[|\dot{s}_{\text{MF}}(\tau_k)|^2],
    \end{align}
    where we utilized the property that for a complex variable $Z$, $\mathbb{E}[(\Re\{Z\})^2] = \frac{1}{2}\mathbb{E}[|Z|^2]$ under the uniform phase distribution. 
    
    Similarly, the second derivative of $f(\tau)$ is expressed as
    \begin{align}
        \ddot{f}(\tau_k) &= 2 \mathrm{Re}\{\ddot{s}_{\text{MF}}(\tau_k) s_{\text{MF}}^*(\tau_k) + |\dot{s}_{\text{MF}}(\tau_k)|^2\}. 
    \end{align}
    Taking the expectation and squaring the result yields
    \begin{align}
        |\mathbb{E}[\ddot{f}(\tau_k)]|^2 \approx 4N^2|\alpha_k|^2 |\mathbb{E}[\ddot{s}_{\text{MF}}(\tau_k)]|^2.
    \end{align}
    Substituting these results into \eqref{MSE2} yields the expression in \eqref{MSE3}, which completes the proof.
\end{proof}

Building on Lemma \ref{lemma2}, we now derive a closed-form expression for the delay estimation MSE of the MF receiver. The first derivative of $s_{\text{MF}}(\tau)$ evaluated at the true delay $\tau_k$ is given by
\begin{align} \label{firstDev1}
    \dot{s}_{\text{MF}}(\tau_k) = (j2\pi \Delta f) \bigg( & \underbrace{\sum_{j \ne k}^K \sum_{n=0}^{N-1} n \alpha_j |x_n|^2 e^{j2\pi n \Delta f (\tau_k - \tau_j)}}_{\text{Sidelobe interference from other scatterers}, \;\mathrm{C}_{1}} \nonumber \\ 
    & + \underbrace{\sum_{n=0}^{N-1} n x_n^* z_n e^{j2\pi n \Delta f \tau_k}}_{\text{Noise}, \;\mathrm{C}_{2}} \bigg),
\end{align}
where the term corresponding to the desired target ($j=k$) vanishes because the target mainlobe reaches its extremum at $\tau_k$ and its first derivative goes to zero. Thus, the first derivative is purely a function of the sidelobe interference from other scatterers and the receiver noise. 

Similarly, the second derivative of $s_{\text{MF}}(\tau)$ at $\tau_k$ is expressed as
\begin{align} \label{SecondDev1}
    \ddot{s}_{\text{MF}}(\tau_k) = (j2\pi\Delta f)^{2} \bigg( & \sum_{j=1}^K \sum_{n=0}^{N-1} n^2 \alpha_j |x_n|^2 e^{j2\pi n\Delta f(\tau_k-\tau_j)} \nonumber \\
    & + \sum_{n=0}^{N-1} n^2 x_n^* z_n e^{j2\pi n\Delta f\tau_k} \bigg).
\end{align}
Utilizing the statistical properties of the constellation and the noise, we arrive at the following theorem for the delay estimation MSE under the MF receiver processing.

\begin{theorem}\label{theo2}
The closed-form expression for the delay estimation MSE of the $k$-th scatterer using the MF receiver is given by
\begin{align}
    \mathrm{MSE}_{\text{MF},k} = \frac{3 \left( (\mu_{4}-1) \sum_{j\neq k}^K |\alpha_{j}|^2 + \sigma^2 \right)} {8\pi^2\Delta f^2 |\alpha_{k}|^2 N^3}. \label{Theo_eq2}
\end{align}
\end{theorem}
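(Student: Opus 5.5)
Starting from Lemma \ref{lemma2}, I would evaluate separately the numerator $\mathbb{E}[|\dot{s}_{\text{MF}}(\tau_k)|^2]$, using the decomposition \eqref{firstDev1} into $\mathrm{C}_1$ and $\mathrm{C}_2$, and the denominator $|\mathbb{E}[\ddot{s}_{\text{MF}}(\tau_k)]|^2$, using \eqref{SecondDev1}. The symbols $x_n$ are treated as i.i.d. across subcarriers with $\mathbb{E}[|x_n|^2]=1$ and $\mathbb{E}[|x_n|^4]=\mu_4$, and they are independent of the zero-mean noise $z_n$. Then $\mathbb{E}[\mathrm{C}_1\mathrm{C}_2^*]=0$, and the numerator splits as $(2\pi\Delta f)^2(\mathbb{E}|\mathrm{C}_1|^2+\mathbb{E}|\mathrm{C}_2|^2)$.

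\textbf{Denominator.} Taking expectations in \eqref{SecondDev1} removes the noise term and replaces $|x_n|^2$ by $1$. This gives $(2\pi\Delta f)^2\sum_j\alpha_j\sum_n n^2 e^{j2\pi n\Delta f(\tau_k-\tau_j)}$. The $j=k$ term equals $\alpha_k\sum_n n^2\approx\alpha_k N^3/3$. For $j\neq k$ the sums are oscillatory: for resolved scatterers they are of lower order in $N$ than $N^3$, so I would neglect them. Hence $|\mathbb{E}[\ddot{s}_{\text{MF}}]|^2\approx(2\pi\Delta f)^4|\alpha_k|^2N^6/9$.

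\textbf{Noise term.} Since $\mathbb{E}[x_n^*z_n x_m z_m^*]=\sigma^2\delta_{nm}$, we get $\mathbb{E}|\mathrm{C}_2|^2=\sigma^2\sum_n n^2\approx\sigma^2N^3/3$.

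\textbf{Sidelobe term.} I would write $|x_n|^2=1+w_n$ with $\mathbb{E}[w_n]=0$ and $\mathbb{E}[w_nw_m]=(\mu_4-1)\delta_{nm}$. The deterministic part $\sum_{j\neq k}\alpha_j\sum_n n e^{j2\pi n\Delta f(\tau_k-\tau_j)}$ is the derivative of the ideal (PSK-like) sidelobe. At resolved positions on the IDFT grid it is $O(N)$ per scatterer, so its squared magnitude is $O(N^2)$ and is negligible against the $N^3$ terms; I would drop it. The random part is $\sum_{j\neq k}\alpha_j\sum_n n w_n e^{j2\pi n\Delta f(\tau_k-\tau_j)}$, with second moment
\begin{align*}
(\mu_4-1)\sum_n n^2\Big|\sum_{j\neq k}\alpha_j e^{j2\pi n\Delta f(\tau_k-\tau_j)}\Big|^2 .
\end{align*}
Expanding the inner square, the cross terms $j\neq j'$ are again oscillatory sums of lower order; treating the $\alpha_j$ as having independent uniform phases removes them exactly. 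This leaves $(\mu_4-1)\sum_{j\neq k}|\alpha_j|^2 N^3/3$.

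\textbf{Assembly.} The numerator is $(2\pi\Delta f)^2\frac{N^3}{3}\big((\mu_4-1)\sum_{j\neq k}|\alpha_j|^2+\sigma^2\big)$. Dividing by $2(2\pi\Delta f)^4|\alpha_k|^2N^6/9$ gives $\frac{3(\cdot)}{2\cdot4\pi^2\Delta f^2|\alpha_k|^2N^3}$, which is exactly \eqref{Theo_eq2}.

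The main obstacle is justifying that the deterministic sidelobe-derivative and cross-scatterer terms are negligible. This relies on the resolvability and grid assumptions together with large $N$ (or on random scatterer phases), so I would state these explicitly as leading-order approximations rather than exact identities.
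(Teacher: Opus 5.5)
Your proposal is correct and follows the paper's proof in Appendix B essentially step for step: the same split of $\dot{s}_{\text{MF}}(\tau_k)$ into $\mathrm{C}_1$ and $\mathrm{C}_2$ with vanishing cross term, the same neglect of the off-peak weighted Dirichlet terms, the same curvature $\alpha_k\sum_n n^2$, and the same large-$N$ substitution $\sum_n n^2 \approx N^3/3$. Your $|x_n|^2 = 1 + w_n$ split is the paper's $\mu_4\sum_n n^2 + |\sum_n n e^{j\phi_{j,n}}|^2 - \sum_n n^2$ rewritten as mean plus fluctuation, and you are slightly more careful than the paper in explicitly disposing of the cross-scatterer ($j \neq j'$) terms, which the paper drops without comment.
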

\renewcommand\qedsymbol{$\blacksquare$}
\begin{proof}
Please refer to Appendix \ref{proof_theo2}.
\end{proof}

\noindent \textbf{Remark 2 (Impact of Modulation Constellation and Multi-target Interference for MF Receiver):} It is observed that the delay estimation MSE for the conventional MF receiver is fundamentally limited by the sidelobes of other delay scatterers. The power of these data-dependent sidelobes is directly scaled by the constellation kurtosis through $\mu_4 - 1$. This finding is consistent with the AF analysis in \cite{liu2025uncovering}, which demonstrates that the average sidelobe level of the AF under random signaling is proportional to the fourth-order moment of the constellation. Furthermore, comparing Theorem \ref{theo1} and Theorem \ref{theo2} reveals that the CRB fails to capture the performance degradation induced by multi-target interference under random signaling. Specifically, \eqref{Theo_eq2} implies that MF-based estimators, including FFT-based and subspace-based methods, are generally non-efficient in multi-target scenarios, $K > 1$. These estimators achieve the CRB only when a unit-amplitude constellation, $\mu_4=1$, is employed or in the strictly single-target case , $K=1$, where the interference term vanishes.

\subsection{Delay Estimation MSE with Reciprocal Filtering Receiver}
We now investigate a specific class of mismatched filtering (MMF) receiver designed to suppress the data-dependent sidelobes of the AF at the cost of SNR degradation. Specifically, we focus on reciprocal filtering (RF), an approach that equalizes the transmitted communication data in ISAC signaling \cite{sturm2011waveform}. The RF receiver is implemented via element-wise division of the received signal $\mathbf{y}$ by the transmit vector $\mathbf{x}$, yielding the output $\mathbf{y}_{\text{RF}} = \mathbf{y} \oslash \mathbf{x}$, which is expressed as
\begin{align}
    \mathbf{y}_{\text{RF}} = \mathbf{a}^T \mathbf{H} + \mathbf{z}_{\text{RF}}.
\end{align}
In contrast to the MF output, $\mathbf{y}_{\text{RF}}$ is independent of the random modulation in the signal term, which effectively eliminates the sidelobe interference from other delay scatterers. However, while $\mathbf{z}_{\text{RF}}$ remains zero-mean, its variance is reshaped by the RF operation as
\begin{align}
    \mathrm{Var}(z_{\text{RF},n}) = \mathbb{E}\!\left[\left|z_n x_n^{-1}\right|^2\right] = \sigma^2 \mathbb{E}\bigl[|x_n|^{-2}\bigr] = \sigma^2 \nu_{-2}.
\end{align}
Here, it should be noted that the data payload is independent between subcarriers. Accordingly, this indicates that an SNR loss at RF output is proportional to the inverse second-order moment $\nu_{-2}$ of the constellation. 

Analogous to the MF case, the delay estimator for the RF receiver is defined as
\begin{align} \label{estimator_RF}
    \hat{\tau} = \arg \max_{\tau \in \mathcal{T}} \left| \mathbf{h}^H(\tau) \mathbf{y}_{\text{RF}}^T \right|,
\end{align}
where $\mathbf{h}(\tau) \in \mathcal{A}$. Let $s_{\text{RF}}(\tau) = \mathbf{h}^H(\tau) \mathbf{y}_{\text{RF}}^T$ denote the correlation output, which is given by
\begin{equation}
    s_{\text{RF}}(\tau) = \sum_{k=1}^K \sum_{n=0}^{N-1} \alpha_k e^{j2\pi n \Delta f (\tau - \tau_k)} + \sum_{n=0}^{N-1} z_{\text{RF},n} e^{j2\pi n \Delta f \tau}.
\end{equation}
By applying the estimation-theoretic framework established in Lemma \ref{lemma1} and Lemma \ref{lemma2}, we derive the following closed-form expression for the delay estimation MSE of the RF receiver.

\begin{theorem}\label{theo3}
The closed-form expression for the delay estimation MSE of the $k$-th target using the RF receiver is given by
\begin{align}
    \mathrm{MSE}_{\text{RF},k} = \frac{3 \sigma^2 \nu_{-2}}{8\pi^2\Delta f^2 |\alpha_{k}|^2 N^3}. \label{Theo_eq3}
\end{align}
\end{theorem}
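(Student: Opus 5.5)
The plan is to reuse the framework of Lemma \ref{lemma1} and Lemma \ref{lemma2} with $s_{\text{MF}}$ replaced by $s_{\text{RF}}$. Lemma \ref{lemma2} only used that the peak value at the true delay is $s_{\text{RF}}(\tau_k)\approx N\alpha_k$ at high SNR and that the first-derivative fluctuation has uniform phase. Both hold for the RF output, since its signal term is deterministic and equal to $N\alpha_k$ at $\tau=\tau_k$. Hence
\begin{equation*}
\mathrm{MSE}_{\text{RF},k}=\frac{\mathbb{E}[|\dot{s}_{\text{RF}}(\tau_k)|^2]}{2|\mathbb{E}[\ddot{s}_{\text{RF}}(\tau_k)]|^2},
\end{equation*}
and it remains to evaluate the numerator and the denominator.

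\textbf{Derivatives at $\tau_k$.} Differentiating $s_{\text{RF}}(\tau)$ gives
\begin{equation*}
\dot{s}_{\text{RF}}(\tau_k)=(j2\pi\Delta f)\Big(\sum_{j\neq k}\sum_{n} n\alpha_j e^{j2\pi n\Delta f(\tau_k-\tau_j)}+\sum_n n z_{\text{RF},n}e^{j2\pi n\Delta f\tau_k}\Big).
\end{equation*}
The $j=k$ term would be $\sum_n n$, which is purely real, so its contribution to $\dot f$ drops out as in the MF case. The key observation is that the interference term no longer carries the random weights $|x_n|^2$. It is a deterministic delay-domain sidelobe of the steering vectors, and under the resolvability assumption (delays on or near the DFT grid, well separated) its value is negligible. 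Equivalently, its fluctuating part is identically zero, in contrast to the $(\mu_4-1)$ term in Theorem \ref{theo2}, which came from $\mathrm{Var}(|x_n|^2)=\mu_4-1$. I therefore drop it and keep only the noise term.

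\textbf{Numerator.} The $z_{\text{RF},n}=z_n/x_n$ are zero-mean and mutually uncorrelated across $n$, because $z_n$ and $x_n$ are independent across subcarriers. Each has variance $\sigma^2\nu_{-2}$. Thus
\begin{equation*}
\mathbb{E}[|\dot{s}_{\text{RF}}|^2]=4\pi^2\Delta f^2\sigma^2\nu_{-2}\sum_n n^2\approx 4\pi^2\Delta f^2\sigma^2\nu_{-2}N^3/3.
\end{equation*}

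\textbf{Denominator.} Since the noise is zero-mean and the cross-scatterer terms are negligible sidelobes,
\begin{equation*}
\mathbb{E}[\ddot{s}_{\text{RF}}(\tau_k)]=-(2\pi\Delta f)^2\alpha_k\sum_n n^2\approx -4\pi^2\Delta f^2\alpha_k N^3/3,
\end{equation*}
so $|\mathbb{E}[\ddot{s}_{\text{RF}}]|^2\approx 16\pi^4\Delta f^4|\alpha_k|^2N^6/9$.

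\textbf{Result.} Substituting into the ratio gives
\begin{equation*}
\frac{(4\pi^2\Delta f^2\sigma^2\nu_{-2}N^3/3)}{2\cdot 16\pi^4\Delta f^4|\alpha_k|^2N^6/9}=\frac{3\sigma^2\nu_{-2}}{8\pi^2\Delta f^2|\alpha_k|^2N^3},
\end{equation*}
which is \eqref{Theo_eq3}.

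\textbf{Main obstacle.} The delicate step is justifying that the deterministic inter-scatterer terms $\sum_n n\,e^{j2\pi n\Delta f(\tau_k-\tau_j)}$ can be dropped. The MF proof in Appendix \ref{proof_theo2} presumably handled the analogous mean component the same way, keeping only the variance $(\mu_4-1)\sum_{j\ne k}|\alpha_j|^2$. I would invoke the same resolvability assumption: for grid-aligned, well-separated delays these weighted geometric sums are $O(N^2)$ with bounded phase behavior, and they do not contribute a data-dependent random component. This gives the independence of the RF MSE from $K$ claimed in the introduction.
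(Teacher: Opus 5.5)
Your proposal is correct and follows the paper's proof in Appendix~\ref{proof_theo3} essentially step for step. You apply the Lemma~\ref{lemma2} ratio to $s_{\text{RF}}$, take the reshaped-noise power $(2\pi\Delta f)^2\sigma^2\nu_{-2}\sum_{n}n^2$ as the numerator and the target term $(2\pi\Delta f)^4|\alpha_k|^2(\sum_n n^2)^2$ as the denominator, and finish with $\sum_n n^2\approx N^3/3$; your explicit handling of the $j=k$ term and of the deterministic inter-scatterer derivative sums only makes visible the approximations the paper makes silently when it says the first derivative is governed solely by the reshaped noise.
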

\renewcommand\qedsymbol{$\blacksquare$}
\begin{proof}
Please refer to Appendix \ref{proof_theo3}.
\end{proof}
\noindent \textbf{Remark 3 (Impact of Modulation Constellation and Multi-target Interference for RF Receiver):} Unlike the delay estimation MSE of the MF receiver, the RF receiver's MSE is invariant to the presence of other delay scatterers. This immunity arises because the element-wise division by the communication symbols perfectly equalizes the signal term, effectively transforming the sidelobes into a noise amplification effect. Specifically, the RF receiver incurs an SNR loss proportional to the inverse second-order moment of the constellation, $\nu_{-2}$. Consequently, comparing Theorem \ref{theo1} and Theorem \ref{theo3} shows that the RF receiver is inherently unable to achieve the CRB even under high SNR conditions and a single-target scenario for non-unit-amplitude constellations where $\nu_{-2} > 1$.

Building on the MSE analysis for MF and RF receivers, we establish the following corollary regarding their relationship with the theoretical CRB.
\begin{corollary}\label{corr1}
    For a multi-target scenario ($K > 1$), the delay estimation MSE of MF and RF receivers, employing conventional estimators based on the dictionary $\mathcal{A}=\{\mathbf{h}(\tau) \; | \; \tau \in \mathcal{T}\}$, asymptotically achieves the expected CRB if and only if the communication data is modulated by a unit-amplitude constellation, such as PSK.
\end{corollary}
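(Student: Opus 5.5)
The plan is to compare the closed-form MSEs of Theorem \ref{theo2} and Theorem \ref{theo3} directly with the leading term of the expected CRB in Theorem \ref{theo1}. "Asymptotically achieves" is interpreted as agreement at leading order in $N$, i.e., equality of the $N^{-3}$ coefficients. By Theorem \ref{theo1},
\begin{align}
N^3\,\mathbb{E}[\mathrm{CRB}_{\tau_k}] \to \frac{3\sigma^2}{8\pi^2\Delta f^2|\alpha_k|^2} \quad (N\to\infty),
\end{align}
because the $\mu_4$-dependent term scales as $N^{-4}$. The ratios of the MSEs to this limit follow from \eqref{Theo_eq2} and \eqref{Theo_eq3}:
\begin{align}
\frac{\mathrm{MSE}_{\text{MF},k}}{\mathbb{E}[\mathrm{CRB}_{\tau_k}]} \to 1+\frac{(\mu_4-1)\sum_{j\neq k}|\alpha_j|^2}{\sigma^2}, \qquad
\frac{\mathrm{MSE}_{\text{RF},k}}{\mathbb{E}[\mathrm{CRB}_{\tau_k}]} \to \nu_{-2}.
\end{align}
The corollary then reduces to the claim that both limits equal one if and only if the constellation has unit amplitude.

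\emph{Sufficiency.} If $|x|=1$ for every $x\in\mathcal{S}$, then $\mu_4=1$ and $\nu_{-2}=1$. Both ratios are therefore one, for any $K$ and any amplitudes. The $N^{-4}$ term in \eqref{Theo_eq0} also vanishes, so the match is exact and not merely asymptotic.

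\emph{Necessity.} Let $\rho=|x_n|^2$ be the squared modulus of a symbol, so that $\mathbb{E}[\rho]=1$ by normalization.
\begin{itemize}
\item By Jensen's inequality applied to the convex function $t\mapsto t^2$, $\mu_4=\mathbb{E}[\rho^2]\ge(\mathbb{E}[\rho])^2=1$. Equality holds if and only if $\rho$ is almost surely constant, which means $\rho\equiv1$.
\item By Jensen's inequality applied to the convex function $t\mapsto 1/t$ on $t>0$, $\nu_{-2}=\mathbb{E}[1/\rho]\ge1/\mathbb{E}[\rho]=1$, with the same equality condition.
\end{itemize}
For a finite $M$-ary constellation with uniform priors, "almost surely" means every point satisfies $|s_m|=1$. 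For the MF receiver with $K>1$, $\sum_{j\neq k}|\alpha_j|^2>0$ because the other scatterers are present with nonzero amplitude. The MF ratio therefore exceeds one whenever $\mu_4>1$. For the RF receiver, the ratio is $\nu_{-2}>1$ for any non-unit-amplitude constellation, regardless of $K$.

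The main obstacle is the logical reading of the statement. For RF, the gap $\nu_{-2}$ is independent of $K$, so the condition $K>1$ matters only for MF. I will make explicit that "MF and RF receivers" means each receiver achieves the bound if and only if the constellation has unit amplitude. For MF, the "only if" direction requires $K>1$ together with nonzero interferer power; this is exactly why the corollary is restricted to multi-target scenarios.

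A second subtlety is that Theorems \ref{theo2} and \ref{theo3} are themselves high-SNR, large-$N$ approximations. The equivalence is therefore claimed only in the regime where those expressions hold. I will note that the comparison uses $\mathbb{E}[\mathrm{CRB}_{\tau_k}]$ from Theorem \ref{theo1} rather than the CRB at the expected FI. Since the two agree at order $N^{-3}$, this choice does not affect the conclusion.
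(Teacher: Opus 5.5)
Your proposal is correct and follows the paper's own argument. Both compare \eqref{Theo_eq2} and \eqref{Theo_eq3} with the leading-order term of \eqref{Theo_eq0}, reduce the claim to $\mu_4=1$ and $\nu_{-2}=1$, and use the normalization $\mathbb{E}[|x_n|^2]=1$ to conclude unit amplitude. Your Jensen's-inequality step, with its strict-convexity equality case, makes rigorous what the paper only asserts, and your treatment of the $N^{-4}$ term and of why $K>1$ matters only for MF are clarifications rather than a different route.
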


\begin{proof}
    In the high-SNR regime, a comparison of \eqref{Theo_eq2} and \eqref{Theo_eq3} with the expected CRB in \eqref{Theo_eq0} reveals that $\mathrm{MSE}_{\text{MF}} = \mathrm{MSE}_{\text{RF}} = \mathbb{E}[\mathrm{CRB}]$ holds if and only if $\mu_4 = 1$ and $\nu_{-2} = 1$. Under the normalization constraint $\mathbb{E}[|x_n|^2] = 1$, these statistical conditions are satisfied exclusively by unit-amplitude constellations, where $|x_n| = 1$ for all $x_n \in \mathcal{S}$.
\end{proof}
While the preceding analysis demonstrates that both MF and RF sensing receivers achieve the best performance with unit-amplitude constellations, such geometries are often suboptimal for high-order communication throughput. This performance gap motivates a joint design approach for ISAC constellation shaping. Specifically, by leveraging the derived closed-form MSE expressions, we can formulate a geometric constellation shaping framework that yields a flexible trade-off between sensing accuracy and communication reliability. In the following section, we propose a receiver-specific optimization problem to tailor the constellation geometry to flexibly balance the S\&C performance.

\section{Receiver-Specific ISAC Geometric Constellation Shaping}\label{Sec::3}
Building on the analytical MSE expressions derived for MF and RF architectures, we propose a geometric constellation shaping (GCS) framework designed to facilitate a flexible trade-off between sensing accuracy and communication reliability. Unlike probabilistic shaping, which modifies the prior probabilities of symbols, GCS optimizes the coordinates of the symbols $\{s_m\}_{m=1}^M$ in the complex plane to directly influence the statistical moments $\mu_4$ and $\nu_{-2}$.

\subsection{Communication Performance Metric: Minimum Euclidean Distance}
In communication systems, the reliability of the link is fundamentally characterized by the bit-error rate (BER). Under AWGN conditions at high SNR, the BER performance is dominated by the most likely error events, specifically the confusion between the two closest points in the constellation set $\mathcal{S}$ \cite{caire2002bit}. This is captured by the minimum Euclidean distance (MED), defined as
\begin{align}
    d_{\min} = \min_{s_i \neq s_j \in \mathcal{S}} |s_i - s_j|.
\end{align}
Maximizing $d_{\min}$ under a normalized power constraint maximizes the noise immunity of the constellation. In our design, $d_{\min}$ acts as the primary communication metric, competing with sensing-centric objectives that tend to push symbols toward the unit circle.

\begin{figure*}[t!]
    \centering
    \subfigure[]{\includegraphics[width=0.30\textwidth]{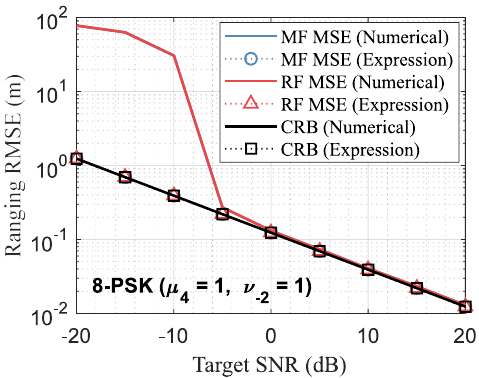}}
    \subfigure[]{\includegraphics[width=0.30\textwidth]{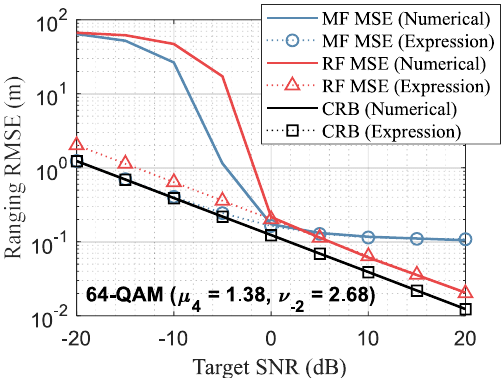}}
    \subfigure[]{\includegraphics[width=0.30\textwidth]{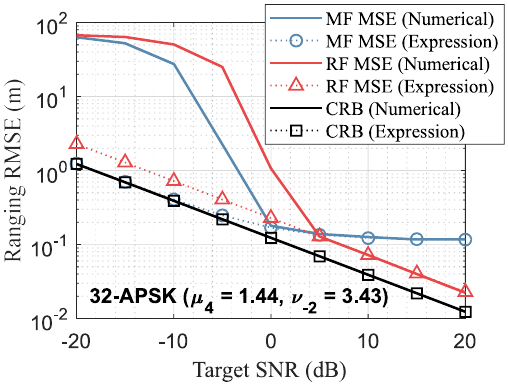}} \\
    \subfigure[]{\includegraphics[width=0.30\textwidth]{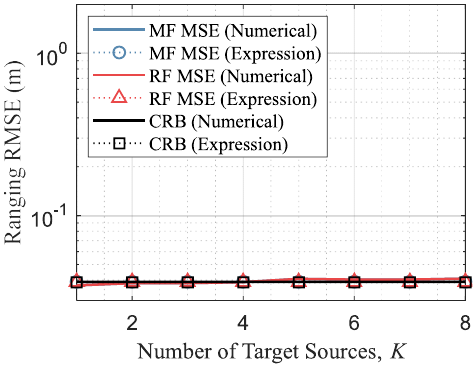}}
    \subfigure[]{\includegraphics[width=0.30\textwidth]{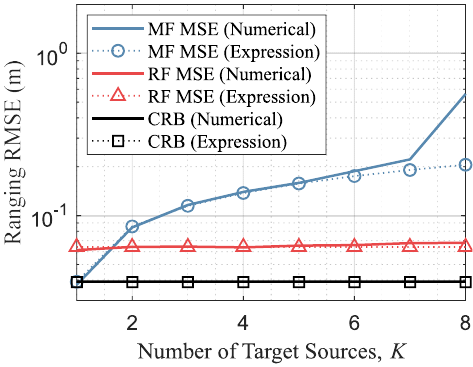}}
    \subfigure[]{\includegraphics[width=0.30\textwidth]{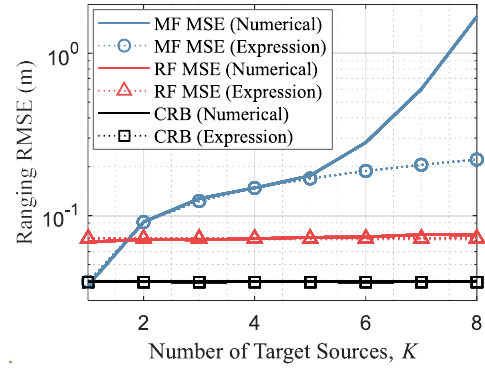}}
    \caption{Range estimation accuracy under three different modulation constellations. Subplots (a), (b), and (c) show the range estimation root-MSE (RMSE) versus target SNR with $K = 3$, while (d), (e), and (f) present the RMSE versus the number of targets at 10~dB SNR. (a) and (d) for QPSK, (b) and (e) for 64QAM, and (c) and (f) for customized 32APSK, respectively.}
    \label{Fig::2}
\end{figure*}

\subsection{Optimization Problem Formulation}
We formulate the joint ISAC constellation design as a multi-objective optimization problem. The goal is to minimize a receiver-specific sensing penalty $f_s(\mathcal{S})$ while simultaneously maximizing the communication-centric MED. The problem is formulated as:
\begin{subequations}\label{Eqn::P1}
    \begin{align}
    \min_{\{s_m\}_{m=1}^M} \quad & \rho f_s(\mathcal{S}) + (1-\rho) (-d_{\min}) \label{P1_obj} \\
    \text{subject to} \quad & |s_i - s_j| \geq d_{\min}, \quad \forall i \neq j, \label{P1_med} \\
    & \frac{1}{M}\sum_{m=1}^M s_m = 0, \label{P1_mean} \\
    & \frac{1}{M}\sum_{m=1}^M s_m^2 = 0, \label{P1_symmetry} \\
    & \frac{1}{M}\sum_{m=1}^M |s_m|^2 = 1, \label{P1_power}
    \end{align}
\end{subequations}
where $\rho \in [0,1]$ is the priority weighting factor between sensing and communication functionality. The sensing objective $f_s(\mathcal{S})$ is selected based on the receiver architecture:
\begin{itemize}
    \item \textbf{MF Receiver:} We set $f_s(\mathcal{S}) = \mu_4 = \frac{1}{M}\sum_{m=1}^{M} |x_m|^4$. Minimizing the kurtosis suppresses the data-dependent sidelobes in the ambiguity function, effectively steering the constellation geometry toward a PSK-like structure.
    \item \textbf{RF Receiver:} We set $f_s(\mathcal{S}) = \nu_{-2} = \frac{1}{M}\sum_{m=1}^{M} |x_m|^{-2}$. Minimizing the inverse second moment mitigates the noise amplification inherent in reciprocal filtering, preventing constellation points from being placed too close to the origin.
\end{itemize}
The constraints in \eqref{P1_mean}--\eqref{P1_power} ensure the fundamental statistical properties of the constellation. Specifically, \eqref{P1_mean} ensures the zero-mean constraint. The symmetry constraint in \eqref{P1_symmetry} imposes circularly symmetric condition, which ensures that the real and imaginary parts of the symbols are balanced. Finally, \eqref{P1_power} enforces the unit-power constraint, normalizing the average energy of the constellation to one.
Since \eqref{Eqn::P1} involves non-convex distance and power constraints, we employ sequential quadratic programming (SQP) to find efficient local optima. To ensure the discovery of a high-quality global solution, we utilize a multi-start strategy by initializing the SQP algorithm with a diverse set of random constellations and selecting the design that yields the minimum objective value.

\section{Numerical Simulations} \label{Sec::4}
In this section, we validate the derived theoretical framework, including the closed-form MSE expressions and the receiver-specific GCS framework, through comprehensive numerical simulations. We consider a CP-OFDM system with $N = 256$ subcarriers, a total bandwidth of $B = 50$ MHz, and a CP duration of $0.64~\mu$s. Each simulation result is obtained by averaging over $1000$ independent Monte Carlo trials. Each target SNR is defined as $|\alpha_k|^2/\sigma^2$. For delay estimation, we employ a high-resolution subspace-based matrix pencil (MP) estimator following the initial MF and RF processing stages \cite{sarkar1995using}.

\subsection{CRB and MSE Performance Analysis}
To investigate the relationship between the modulation constellation geometry and sensing performance, we evaluate four distinct constellations: QPSK, 64QAM, and a non-standard 32APSK configuration featuring four concentric rings with a radius ratio of $1:2:3:4$, whose $\mu_4=1.44$ and $\nu_{-2}=3.43$.

First, we examine the range estimation accuracy across various constellations to verify the validity of Theorem \ref{theo2} and Theorem \ref{theo3}. As illustrated in Fig. \ref{Fig::2}, the expected CRB remains largely invariant to the specific modulation format, consistent with Theorem \ref{theo1}. In contrast, the Root-MSE (RMSE) of the MF- and RF-based receivers exhibits clear dependency on the constellation moments $\mu_4$ and $\nu_{-2}$, respectively. Notably, the derived closed-form MSE expressions show excellent agreement with the numerical results for target SNRs exceeding $0$~dB. An important observation from Fig. \ref{Fig::2}(a)--(c) is that the MF receiver performance reaches an error floor as the SNR increases. This saturation is attributed to the data-dependent sidelobe interference from competing targets, which becomes the dominant error scatterer over thermal noise in multi-target scenarios ($K > 1$). Conversely, the RF-based receiver avoids this floor by effectively equalizing the sidelobe effects, maintaining a performance trajectory parallel to the CRB.

The influence of the number of scatterers $K$ on range estimation is detailed in Fig. \ref{Fig::2}(d)--(f). For QPSK, which serves as the sensing-optimal baseline ($\mu_4 = \nu_{-2} = 1$), the RMSE remains constant regardless of $K$. However, for non-unit-amplitude constellations ($\mu_4 > 1$), the MF receiver suffers from progressive performance degradation as $K$ increases, due to the cumulative power of the sidelobes identified in \eqref{Theo_eq2}. In contrast, the RF receiver maintains constant performance irrespective of the scatterer count, though it sustains a persistent gap from the CRB. This gap is the direct result of the noise amplification factor $\nu_{-2}$ derived in \eqref{Theo_eq3}. These results confirm that our closed-form expressions provide a tractable and accurate characterization of sensing performance in communication-centric OFDM-ISAC systems.

\subsection{Performance Trade-offs in ISAC GCS}
\begin{figure}[t!]
    \centering
    \subfigure[]{\includegraphics[width=0.48\textwidth]{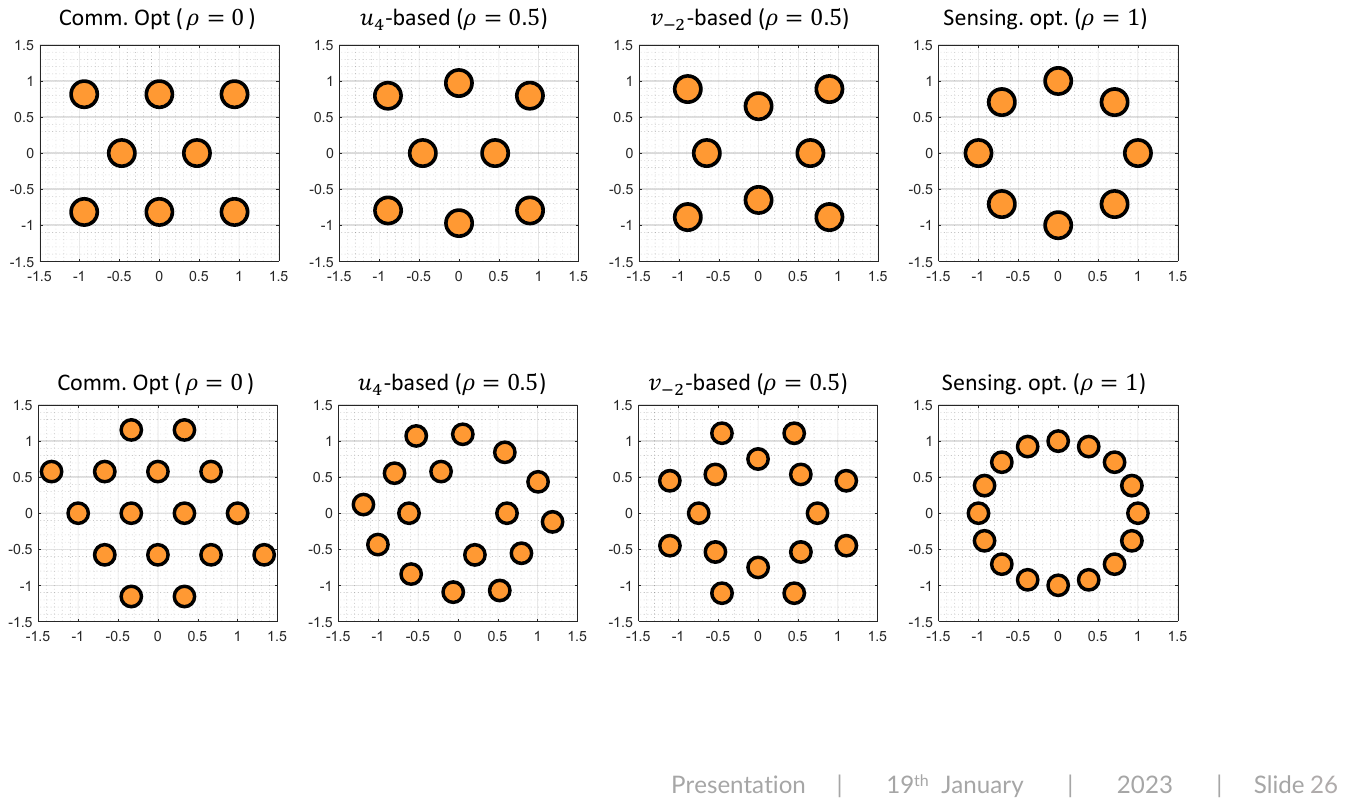}} \\
    \subfigure[]{\includegraphics[width=0.48\textwidth]{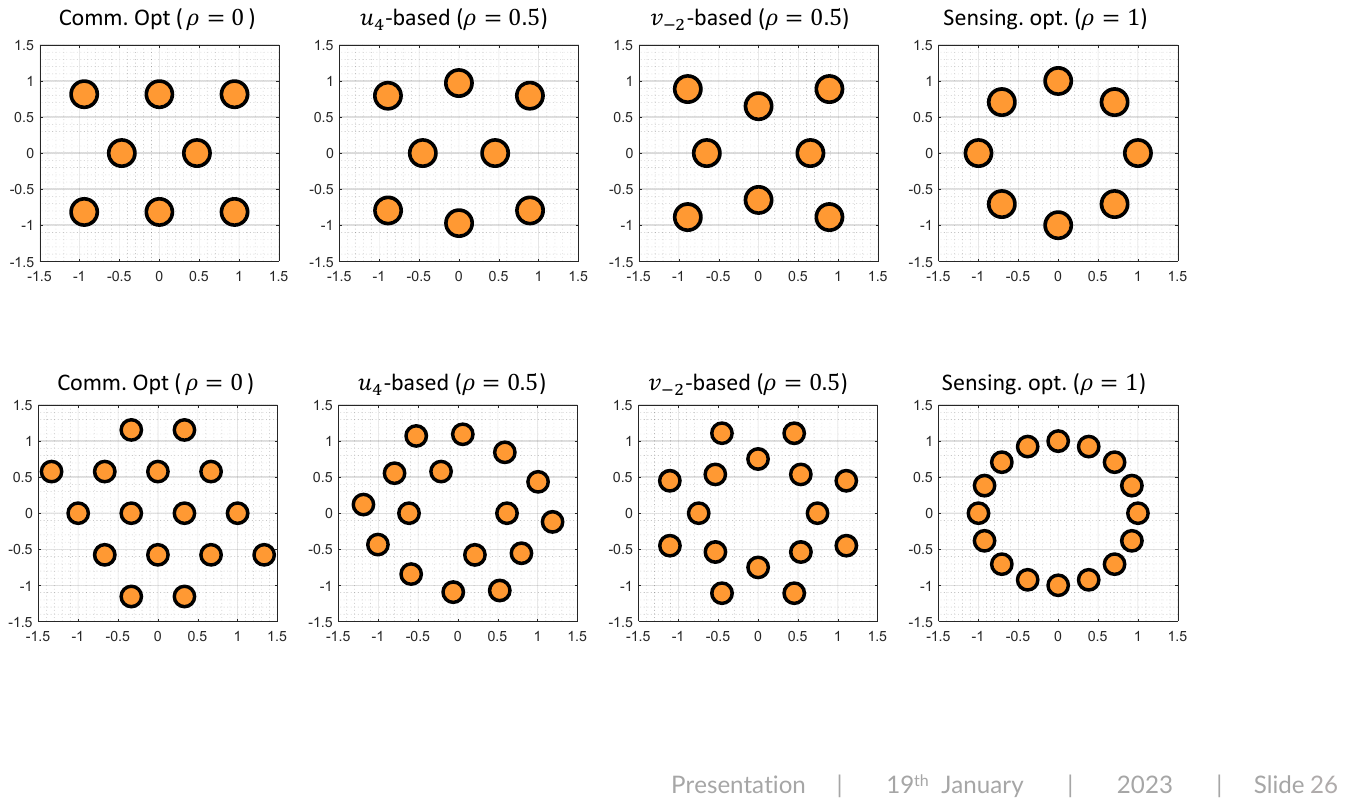}}
    \caption{Designed receiver-specific ISAC constellation geometries with various priority weights when (a) $M = 8$ and (b) $M = 16$.}
    \label{Fig::3}
\end{figure}

\begin{table}[t!]
    \centering
    \fontsize{8.3}{12}\selectfont
    \caption{Values of $d_{\min}$, $\mu_4$ and $\nu_{-2}$ for the resulting 8GCS and 16GCS ISAC constellations.} 
        \label{table3}
    \begin{tabular}{>{\centering\arraybackslash} m{3.5em} | >{\centering\arraybackslash} m{4em}  | >
    {\centering\arraybackslash} m{5em}  | >
    {\centering\arraybackslash} m{5em}  | >{\centering\arraybackslash} m{4em}}
        \toprule
             $M=8$ & $\rho = 0$ & $\rho = 0.5$, ($\mu_4$-based) & $\rho = 0.5 $ ($\nu_{-2}$-based) & $\rho = 1$  \\ 
            \hline
        \midrule
        $d_{\min}$   & 0.94 & 0.91 & 0.92 & 0.77   \\ \hline
        $\mu_4$   & 1.33 & 1.25 & 1.33 & 1.00    \\ \hline
        $\nu_{-2}$  & 1.82 & 1.83 & 1.5 & 1.00 \\ 

        \toprule
             $M=16$ & $\rho = 0$ & $\rho = 0.5$, ($\mu_4$-based) & $\rho = 0.5 $ ($\nu_{-2}$-based) & $\rho = 1$  \\ 
            \hline
        \midrule
        $d_{\min}$   & 0.67 & 0.58 & 0.58 & 0.39   \\ \hline
        $\mu_4$   & 1.37 & 1.15 & 1.19 & 1.00    \\ \hline
        $\nu_{-2}$  & 2.18 & 1.30 & 1.23 & 1.00 \\ 
        \bottomrule
    \end{tabular}
\end{table}

\begin{figure}[t!]
    \centering
    \subfigure[]{\includegraphics[width=0.48\textwidth]{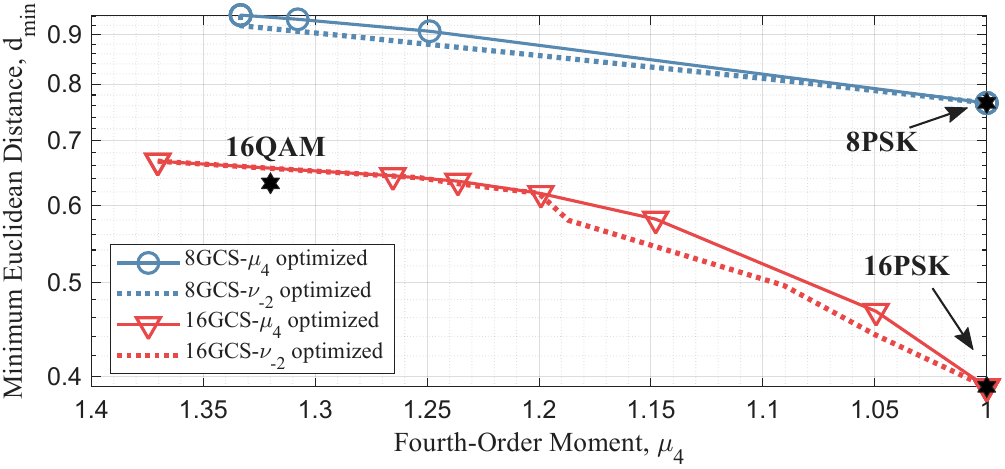}} \\
    \subfigure[]{\includegraphics[width=0.48\textwidth]{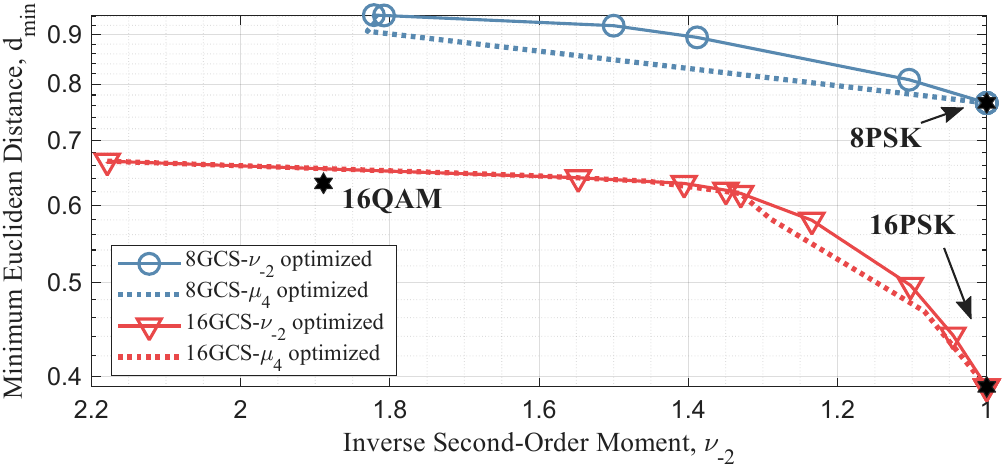}}
    \caption{Performance trade-offs in ISAC GCS: (a) The forth-order moment ($\mu_4$) vs. MED ($d_{\min}$), and (b) the inverse second-order moment ($\nu_{-2}$) vs. MED ($d_{\min}$).}
    \label{Fig::4}
\end{figure}
We further evaluate the proposed receiver-specific GCS framework in terms of its capability to facilitate a flexible S\&C performance trade-off. Fig. \ref{Fig::3}(a) and \ref{Fig::3}(b) illustrate the 8GCS and 16GCS solutions obtained by solving the optimization problem in \eqref{Eqn::P1} for various values of $\rho$. In the extreme cases of communication-optimal ($\rho = 0$) and sensing-optimal ($\rho = 1$) designs, the resulting constellations converge to identical geometries regardless of the sensing receiver architecture. 

However, in the joint-optimization regime ($0 < \rho < 1$), the GCS solutions diverge significantly based on the sensing receiver processing. As shown in Fig. \ref{Fig::3}, even for similar $d_{\min}$ values, which imply equivalent communication reliability, the constellations tailored for MF and RF receivers exhibit distinct spatial distributions. This behavior is driven by the different statistical sensitivities: the MF-oriented GCS seeks to minimize magnitude fluctuations, $\mu_4$, while the RF-oriented GCS avoids placing symbols near the origin to prevent noise amplification, $\nu_{-2}$. The specific values for $d_{\min}$, $\mu_4$, and $\nu_{-2}$ for these designs are summarized in Table \ref{table3}.

The fundamental trade-offs between communication metric ($d_{\min}$) and sensing metrics ($\mu_4$ and $\nu_{-2}$) are depicted in Fig. \ref{Fig::4}. The results underscore a critical design insight: a constellation optimized for the MF receiver exhibits significant performance loss if processed by an RF receiver, and vice versa. Specifically, a low-$\mu_4$ constellation does not necessarily yield a low $\nu_{-2}$, particularly if the design includes points near the origin that maximize $d_{\min}$. These results confirm that ISAC constellation shaping must be tailored to the specific processing chain of the sensing receiver to achieve the optimal ISAC performance trade-off.

\section{Over-the-Air Experimental Validation} \label{Sec::5}
\begin{figure}[t!]
    \centering
    {\includegraphics[width=0.48\textwidth]{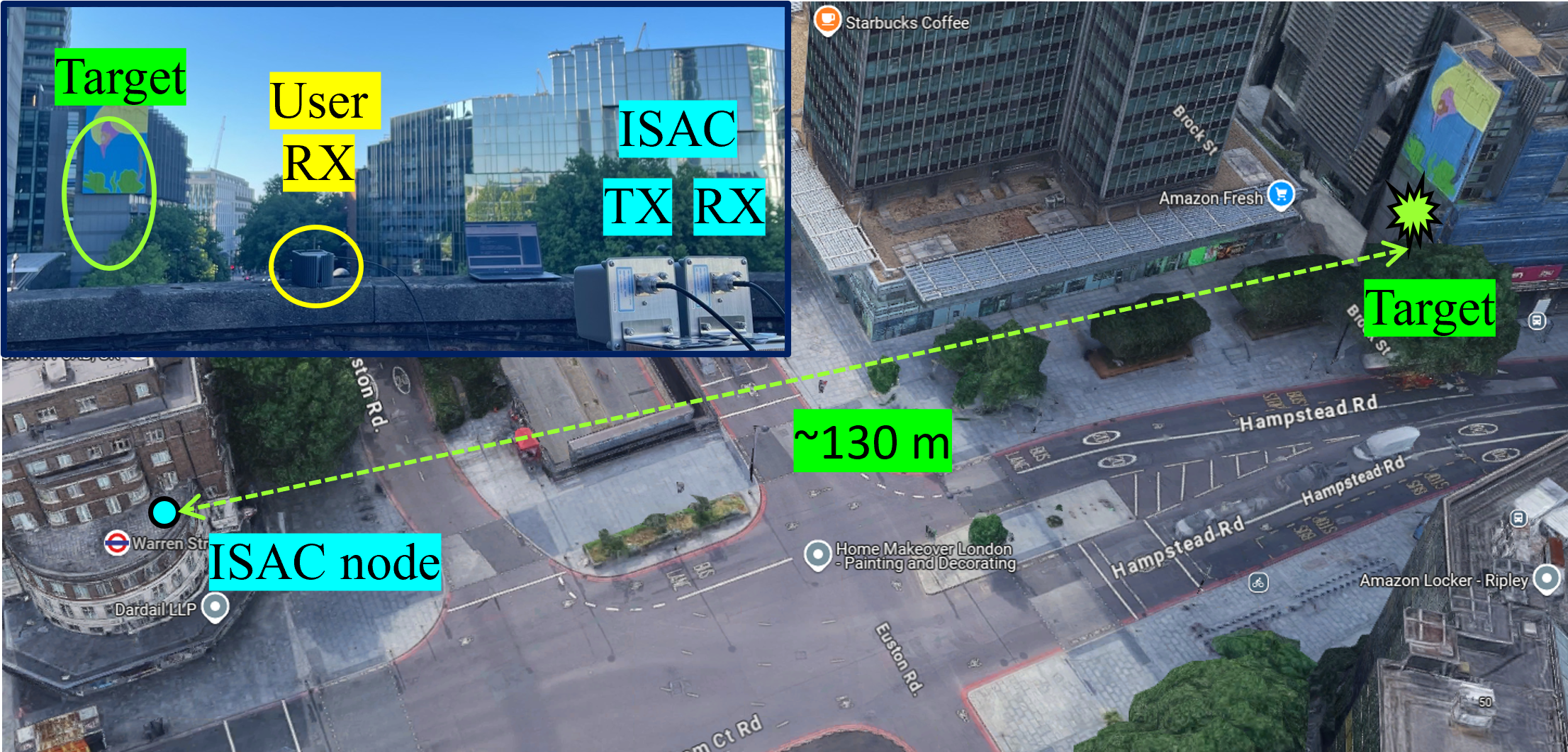}}
    \caption{Photograph of the measurement setup.}
    \label{Fig::5}
\end{figure}

\begin{table}[t!]
    \centering
    \fontsize{8.3}{12}\selectfont
    \caption{OFDM-ISAC Prototype System Parameters.} 
        \label{table4}
    \begin{tabular}{>{\centering\arraybackslash} m{15em} | >{\centering\arraybackslash} m{10em}}
        \toprule
             Specification & Values  \\ 
            \hline
        \midrule
        Center frequency, $f_c$   & 2.4 GHz  \\ \hline
        Bandwidth, $B$   & 20 MHz  \\ \hline
        Number of subcarriers, $N$ & 512 \\ \hline
        CP length & 1.6 $\mu$s \\ \hline
        Symbol duration & 14.4 $\mu$s\\ \hline
        Number of symbols & 512 \\ \hline
        Sampling rate  & 60 MHz \\ \hline
        Antenna gain  & 12 dBi \\ 
        \bottomrule
    \end{tabular}
\end{table}

To bridge the gap between theoretical analysis and practical implementation, we validate the proposed framework through over-the-air experiments using a hardware-in-the-loop prototype. The experimental setup utilizes an AD9363-based software-defined radio (SDR) interfaced with directional antennas providing a 12~dBi gain. The target of interest is a building facade with high electromagnetic reflectivity located approximately 130~m from the ISAC node. As illustrated in Fig. \ref{Fig::5}, the transmitted OFDM-ISAC signal is intercepted by a user receiver while the reflected echoes are processed by the sensing receiver. The system operates at a carrier frequency of 2.4~GHz with a bandwidth of $B = 20$~MHz and $N = 512$ subcarriers. Each transmission frame consists of 512 OFDM symbols, yielding a coherent processing gain of 27~dB. The detailed parameters of the OFDM-ISAC prototype are summarized in Table \ref{table4}. All experimental results are averaged over 100 independent frames to ensure statistical significance.

\subsection{Ranging Performance Evaluation}

\begin{figure}[t!]
    \centering
    \subfigure[]{\includegraphics[width=0.15\textwidth]{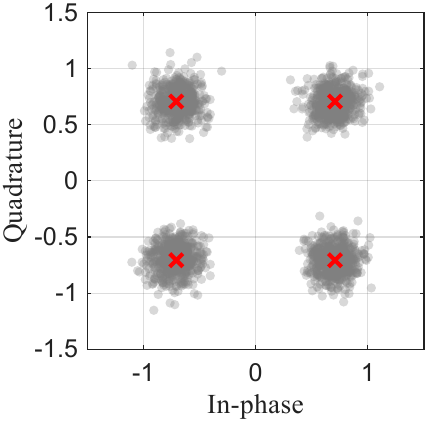}}
    \subfigure[]{\includegraphics[width=0.15\textwidth]{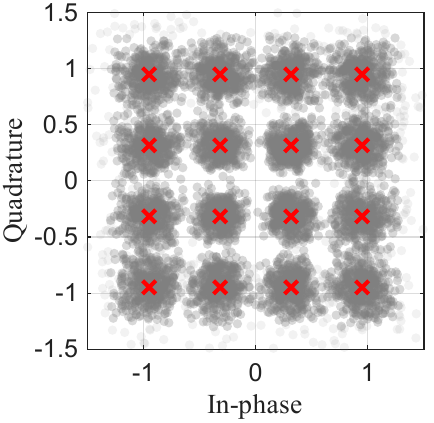}}
    \subfigure[]{\includegraphics[width=0.15\textwidth]{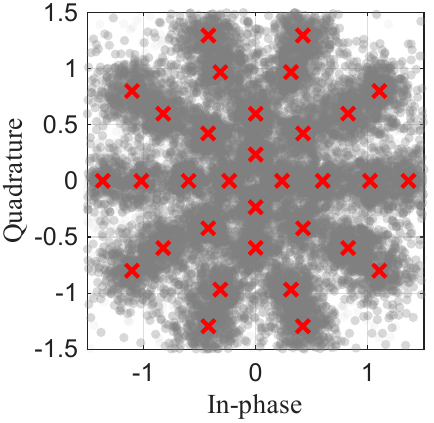}} 
    \caption{Experimental results of OFDM-ISAC: received symbols at the communication user for (a) QPSK, (b) 16QAM, and (C) customized 32APSK.}
    \label{Fig::6}
\end{figure}

\begin{figure}[t!]
    \centering
    {\includegraphics[width=0.48\textwidth]{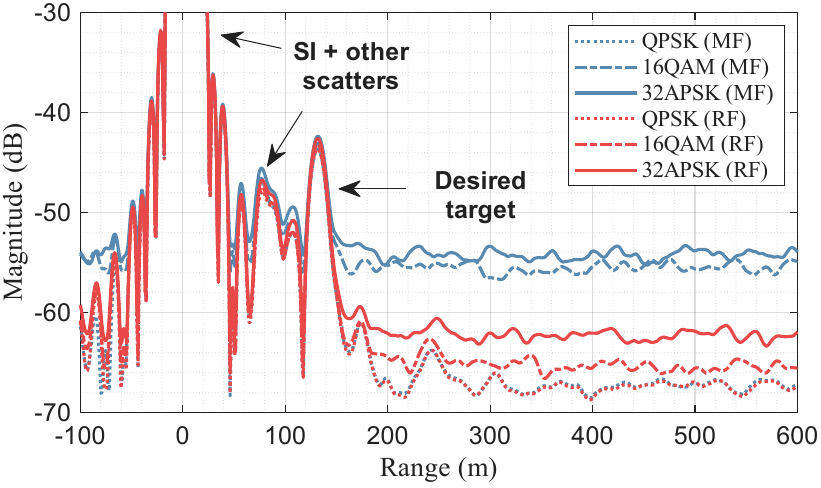}}
    \caption{Measured range profiles obtained by MF and RF processing at the sensing receiver with QPSK, 16QAM, and customized 32APSK.}
    \label{Fig::7}
\end{figure}

The constellation diagrams captured at the user receiver for QPSK, 16QAM, and 32APSK are shown in Fig. \ref{Fig::6}, reflecting the channel-induced distortions and noise. Fig. \ref{Fig::7} presents the corresponding range profiles obtained via MF and RF processing. A key observation is that the RF receiver exhibits a lower effective noise floor compared to the MF receiver, except in the specific case of QPSK modulation. In the MF case, the strong direct-path self-interference (SI) between the transmitter and receiver, coupled with unintended environmental clutter, generates significant sidelobes due to randomly modulated communication data. These sidelobes raise the effective noise floor and can potentially mask the target response. 

In contrast, the RF receiver suppresses these data-dependent sidelobes, although its noise floor is observed to rise in proportion to the inverse second-order moment $\nu_{-2}$ of the constellation. This highlights a critical practical advantage: in environments characterized by dense scatterers and clutter, the RF receiver can provide superior dynamic range when the communication payload employs higher-order constellations ($\mu_4 > 1, \nu_{-2} > 1$), as it remains immune to the cumulative sidelobe interference that degrades the performance of the MF receiver.

\begin{figure}[t!]
    \centering
    {\includegraphics[width=0.48\textwidth]{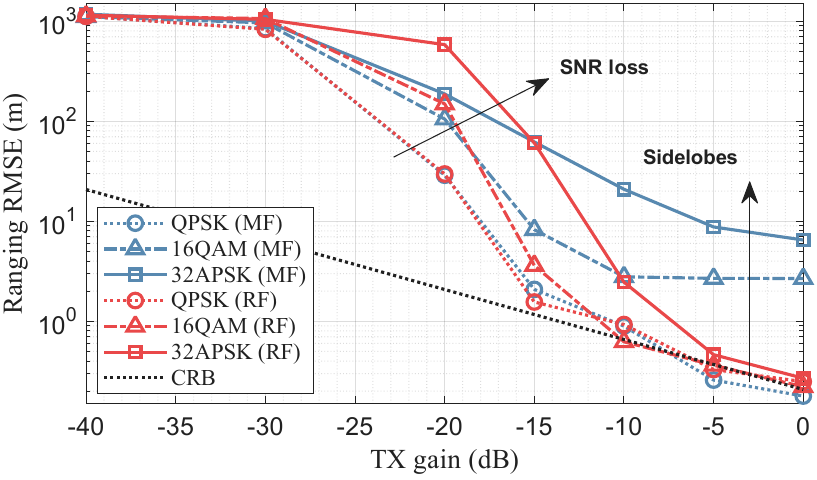}}
    \caption{Measured range estimation accuracy under three different modulation constellations. The reference target range is 130.88~m, estimated with TX gain of 0~dB using QPSK. The CRB is plotted based on the estimated target SNR after post-processing.}
    \label{Fig::8}
\end{figure}

Furthermore, the ranging RMSE for each constellation is evaluated by varying the transmitter gain to control the effective target SNR. As shown in Fig. \ref{Fig::8}, the experimental measurements are in close agreement with both the theoretical analysis and numerical simulations. For the MF receiver, the RMSE reaches an interference-limited saturation point as the SNR increases, whereas the RF receiver maintains a noise-limited trajectory. These experimental results confirm that MF and RF architectures are fundamentally influenced by the constellation geometry and the presence of environmental clutter in distinct ways, necessitating the receiver-specific design approach proposed in this work.

\subsection{Receiver-Specific S\&C Performance Trade-off under GCS}
\begin{figure}[t!]
    \centering
    {\includegraphics[width=0.48\textwidth]{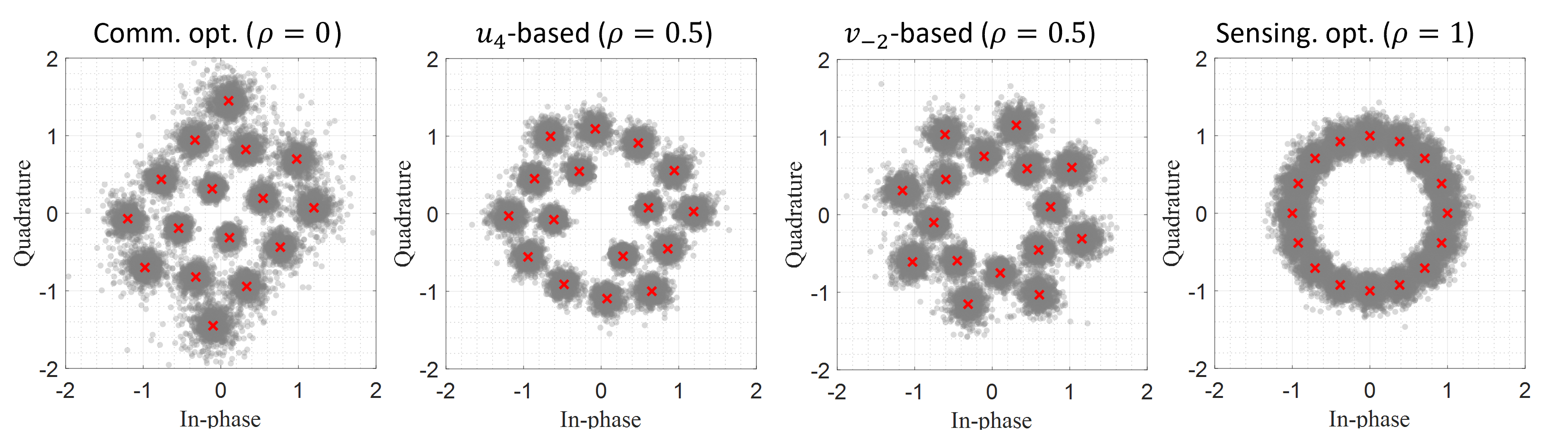}}
    \caption{Measured designed 16GCS constellations at the communication user.}
    \label{Fig::9}
\end{figure}

Finally, we evaluate the proposed ISAC GCS framework through the first reported over-the-air experimental validation of receiver-specific S\&C performance trade-offs. Fig. \ref{Fig::9} presents the captured 16GCS constellations at the communication user for varying priority weights $\rho$. As observed, the communication-optimal design ($\rho=0$) resembles a tilted 16QAM geometry to maximize the MED. Conversely, the sensing-optimal design ($\rho=1$) converges to a 16PSK to minimize magnitude fluctuations. In the joint-optimization regime ($\rho=0.5$), the constellation geometries diverge based on the receiver architecture: the $\mu_4$-optimized constellation for MF maintains a more uniform magnitude distribution, whereas the $\nu_{-2}$-optimized constellation for RF keeps symbols away from the origin to prevent noise amplification.

\begin{figure}[t!]
    \centering
    \subfigure[]{\includegraphics[width=0.45\textwidth]{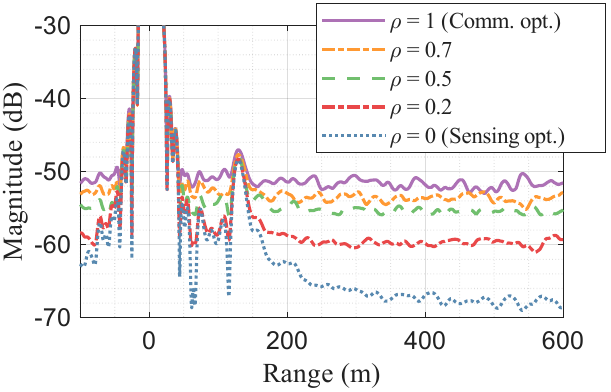}}\\
    \subfigure[]{\includegraphics[width=0.45\textwidth]{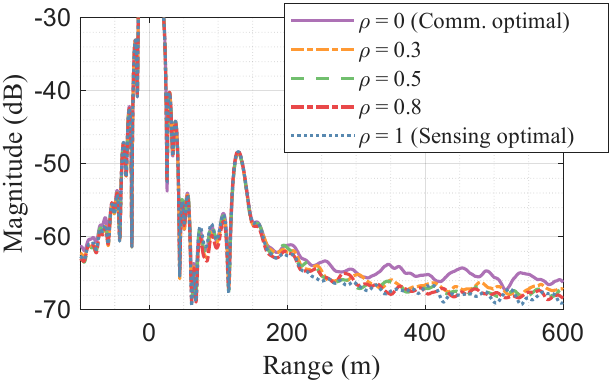}}
    \caption{Measured range profiles with 16GCS under various S\&C priority weights based on (a) MF-based GCS with MF processing and (b) RF-based GCS with RF processing.}
    \label{Fig::10}
\end{figure}

The impact of these designs on the sensing range profile is validated in Fig. \ref{Fig::10}. For the MF-based GCS processed with an MF receiver in Fig. \ref{Fig::10}(a), shifting the priority from communication-optimal ($\rho=0$) to sensing-optimal ($\rho=1$) results in a substantial reduction of the sidelobe level, lowering it from approximately $-50$~dB to nearly $-70$~dB. This confirms that the proposed optimization suppresses random sidelobes by nearly $20$~dB in practical environments, significantly improving the dynamic range for target detection and estimation accuracy. Similarly, for RF-based processing in Fig. \ref{Fig::10}(b), the sensing-optimal design effectively manages the background noise levels. By minimizing the noise enhancement factor $\nu_{-2}$, the GCS allows the RF receiver to maintain a cleaner range profile even as the communication priority increases, preventing the target echo from being buried in amplified thermal noise.

\begin{figure}[t!]
    \centering
    {\includegraphics[width=0.45\textwidth]{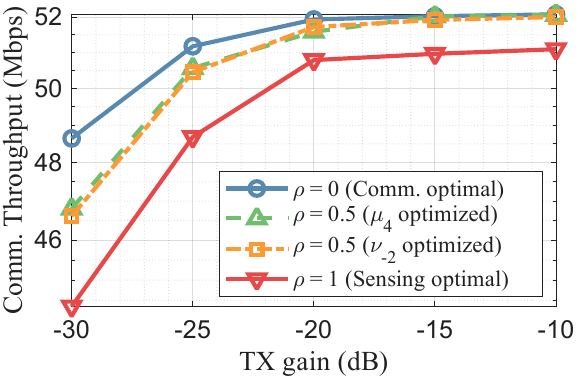}}
    \caption{Measured communication throughput for designed 16GCS under varying TX gain.}
    \label{Fig::11}
\end{figure}

To quantify the communication performance, we evaluate the achieved throughput. The measured communication throughput is calculated as follows:
\begin{equation}
    \eta \; \text{(bits/s)} = \log_2 M \cdot (1 - \text{BER}) \cdot \frac{N}{T_{\text{sym}} + T_{\text{cp}}},
\end{equation}
where $M$ is the constellation size, $N$ is the number of subcarriers, and $T_{\text{sym}}$ and $T_{\text{cp}}$ denote the OFDM symbol and CP durations, respectively.

The measured throughput is characterized in Fig. \ref{Fig::11} as a function of transmitter gain. While the sensing-optimal design ($\rho=1$) incurs a noticeable throughput loss due to the reduced MED inherent in PSK-like structures, the joint designs ($\rho=0.5$) for both MF and RF architectures achieve a throughput remarkably close to the communication-optimal baseline. At high TX gains, these joint designs sustain a throughput of approximately $52$~Mbps, demonstrating that significant sensing precision gains can be realized with negligible impact on data rates.

\begin{figure}[t!]
    \centering
    \subfigure[]{\includegraphics[width=0.45\textwidth]{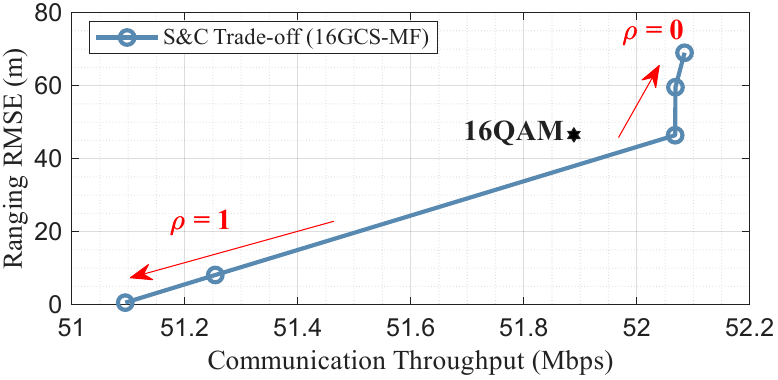}}\\
    \subfigure[]{\includegraphics[width=0.45\textwidth]{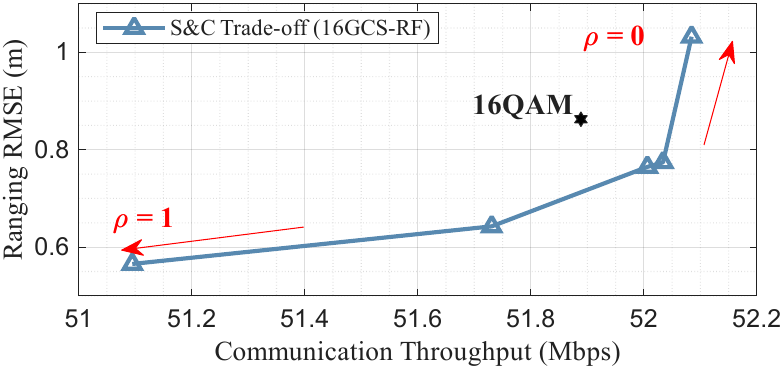}}
    \caption{Measured bottom-line S\&C performance trade-off with the proposed ISAC GCS framework: Ranging RMSE vs. communication throughput. (a) MF-based GCS and (b) RF-based GCS.}
    \label{Fig::12}
\end{figure}

The bottom-line S\&C trade-off is summarized in Fig. \ref{Fig::12}, which depicts ranging RMSE versus communication throughput. In Fig. \ref{Fig::12}(a), the MF-based GCS shows a dramatic RMSE reduction from over $60$~m (communication-optimal) to smaller than $1$~m (sensing-optimal). This highlights the transition from an interference-limited regime to a sidelobe-free regime. In Fig. \ref{Fig::12}(b), the RF-based GCS maintains consistently high precision, with RMSE values staying below $1.1$~m and reaching sub-meter accuracy ($0.6$~m) as $\rho \to 1$. Most importantly, for both receivers, the GCS-optimized curves reside consistently to the lower-right of the standard 16QAM benchmark (black star), signifying higher throughput for a given sensing accuracy. These results represent the first experimental confirmation that receiver-specific OFDM-ISAC constellation shaping successfully navigates the fundamental S\&C trade-off, outperforming standard modulation formats by tailoring the modulation constellation to the specific signal processing chain of the sensing receiver.

\section{Conclusion} \label{Sec::6}
This paper presents a comprehensive analytical and experimental study of data-payload based OFDM-ISAC systems, deriving closed-form range estimation MSE expressions that reveal the distinct impacts of constellation geometry on MF and RF receiver architectures. Our analysis identifies that the MF receiver's sensing performance is fundamentally limited by the constellation kurtosis $\mu_4$ due to data-dependent sidelobes, whereas the RF receiver is governed by the inverse second-order moment $\nu_{-2}$ due to noise amplification. Leveraging these insights, we propose a receiver-specific ISAC GCS framework that optimizes the trade-off between communication throughput and sensing precision. The first reported over-the-air experimental validation of this framework confirms that our GCS-optimized designs successfully extend the S\&C performance trade-off, providing a 20~dB reduction in the sidelobe noise floor for MF processing in practical environment, while achieving sub-meter ranging accuracy. This work provides a practical foundation for designing hardware-compatible and receiver-optimized waveforms in 6G OFDM-ISAC systems.

\appendices
\section{Proof of Theorem \ref{theo1}}\label{proof_theo1}
Let $W = \sum_{n=0}^{N-1} n^2 |x_n|^2$ denote the weighted sum of independent random variables from \eqref{CRB}. Since $W$ is a random variable, we approximate the expectation of its reciprocal using a second-order Taylor expansion about its mean $\mathbb{E}[W]$ as
\begin{align}
    \mathbb{E}\left[ \frac{1}{W} \right] &\approx \frac{1}{\mathbb{E}[W]} + \frac{\mathrm{Var}(W)}{(\mathbb{E}[W])^3} \nonumber \\
    & = \frac{1}{\mathbb{E}[|x_n|^2]\sum_{n=0}^{N-1} n^2} + \frac{\mathrm{Var}(|x_n|^2) \sum_{n=0}^{N-1} n^4}{(\mathbb{E}[|x_n|^2]\sum_{n=0}^{N-1} n^2)^3}, \label{Theo_eq1}
\end{align}
This approximation is valid under the assumption that $W$ is strictly positive and its distribution is sufficiently concentrated around its mean, which is consistent with OFDM signals featuring a large number of subcarriers where the aggregate power $W$ exhibits low relative variance. 

Given that $\mathbb{E}[|x_n|^2] = 1$ and $\mathrm{Var}(|x_n|^2) = \mu_4 - 1$, and utilizing the power sum formulas $\sum_{n=0}^{N-1} n^2 = \frac{(N-1)N(2N-1)}{6}$ and $\sum_{n=0}^{N-1} n^4 = \frac{(N-1)N(2N-1)(3N^2-3N-1)}{30}$, \eqref{Theo_eq1} can be expanded as
\begin{align}
    \mathbb{E}\left[ \frac{1}{W} \right] \approx \frac{6}{N(N-1)(2N-1)} + \frac{36(\mu_4 - 1)(3N^2 - 3N - 1)}{30 \left[ \frac{N(N-1)(2N-1)}{6} \right]^3}.
\end{align}
By applying a first-order asymptotic approximation for large $N$, the expression simplifies to
\begin{align}
    \mathbb{E}\left[ \frac{1}{W} \right] \approx \frac{3}{N^3} + \frac{27(\mu_4 - 1)}{5N^4}.
\end{align}
Substituting this into the CRB expression in \eqref{CRB} yields \eqref{Theo_eq0}.

\section{Proof of Theorem \ref{theo2}}\label{proof_theo2}
Let $C_1$ and $C_2$ denote the sidelobe interference and the noise terms from \eqref{firstDev1}, respectively. To evaluate the MSE, we compute the expectation of the squared magnitude of the first derivative $\mathbb{E}[|\dot{s}_{\text{MF}}(\tau_k)|^2]$. Since the cross-product terms between $C_1$ and $C_2$ are zero-mean and uncorrelated, the total power is the sum of the individual expectations. The expectation of the squared magnitude of $C_1$ is given by
\begin{align}\label{SLMSE}
    \mathbb{E}\left[|C_{1}|^2\right] &= \sum_{j \ne k}^K |\alpha_j|^2 \Bigg( \sum_{n=0}^{N-1} n^2 \mathbb{E}[|x_n|^4] \nonumber \\
    & \quad + \sum_{n = 0}^{N-1} \sum_{m \ne n}^{N-1} n m \mathbb{E}[|x_n|^2] \mathbb{E}[|x_m|^2] e^{j(\phi_{j,n} - \phi_{j,m})} \Bigg) \nonumber \\
    &= \sum_{j \ne k}^K |\alpha_j|^2 \left( \mu_4 \sum_{n=0}^{N-1} n^2 + \left| \sum_{n=0}^{N-1} n e^{j\phi_{j,n}} \right|^2 - \sum_{n=0}^{N-1} n^2 \right) \nonumber \\
    &\approx (\mu_4 - 1) \sum_{j \ne k}^K |\alpha_j|^2 \sum_{n=0}^{N-1} n^2,
\end{align}
where $\phi_{j,n} = 2\pi n \Delta f (\tau_k - \tau_j)$. In \eqref{SLMSE}, we invoke the assumption that for sufficiently resolvable targets ($|\tau_k - \tau_j| \gg 1/B$), the terms $\left| \sum_{n=0}^{N-1} n e^{j\phi_{j,n}} \right|^2$ represent the off-peak values of the range-weighted Dirichlet kernel, which are negligible compared to the sum-of-squares term. 

Next, the expectation of the squared magnitude of the noise term $C_2$ is expressed as
\begin{align}\label{NoiseMSE}
    \mathbb{E}[|C_{2}|^2] &= \sum_{n=0}^{N-1} n^2 \, \mathbb{E}[|x_n^* z_n|^2] = \sum_{n=0}^{N-1} n^2 \, \mathbb{E}[|x_n|^2] \mathbb{E}[|z_n|^2] \nonumber \\
    &= \sigma^2 \sum_{n=0}^{N-1} n^2.
\end{align}
Combining \eqref{SLMSE} and \eqref{NoiseMSE} yields
\begin{align} \label{firstexp}
    \mathbb{E}[|\dot{s}_{\text{MF}}(\tau_k)|^2] = (2\pi \Delta f)^2 \sum_{n=0}^{N-1} n^2 \left( (\mu_4 - 1) \sum_{j \ne k}^K |\alpha_j|^2 + \sigma^2 \right).
\end{align}
Regarding the curvature, the expectation of the second derivative at the true delay $\tau_k$ is obtained as
\begin{align} \label{secondexp}
    \mathbb{E}[\ddot{s}_{\text{MF}}(\tau_k)] = (j2\pi\Delta f)^{2} \alpha_k \sum_{n=0}^{N-1} n^2.
\end{align}
This follows because the interference terms from other scatterers ($j \neq k$) and the noise term vanish in expectation due to the phase de-correlation of the complex exponentials and the zero-mean property of the noise, respectively. Finally, substituting \eqref{firstexp} and \eqref{secondexp} into the general MSE expression in \eqref{MSE3}, we obtain
\begin{align} 
    \mathbb{E}[(\hat{\tau}_k - \tau_k)^2] &\approx \frac{(2\pi\Delta f)^2 \sum_{n=0}^{N-1}n^2 \left( (\mu_{4}-1) \sum_{j\neq k}^K |\alpha_{j}|^2 + \sigma^2 \right)} {2|\alpha_{k}|^2 (2\pi\Delta f)^4 \left(\sum_{n=0}^{N-1}n^2\right)^2} \nonumber \\
    &= \frac{(\mu_{4}-1) \sum_{j\neq k}^K |\alpha_{j}|^2 + \sigma^2} {2|\alpha_{k}|^2 (2\pi\Delta f)^2 \sum_{n=0}^{N-1}n^2}.
\end{align}
Applying the large-$N$ approximation results in
\begin{align}
    \mathrm{MSE}_{\text{MF},k} \approx \frac{3 \left( (\mu_{4}-1) \sum_{j\neq k}^K |\alpha_{j}|^2 + \sigma^2 \right)} {8\pi^2 \Delta f^2 |\alpha_{k}|^2 N^3}.
\end{align}
This completes the proof.

\section{Proof of Theorem \ref{theo3}}\label{proof_theo3}
Following the framework in Lemma \ref{lemma2}, we evaluate the first and second derivatives of the RF correlation output $s_{\text{RF}}(\tau)$ at $\tau_k$. Since the signal term is equalized and independent of random communication data, the first derivative is governed solely by the reshaped noise. Its expected squared magnitude is
\begin{align}
    \mathbb{E}\bigl[\lvert \dot{s}_{\text{RF}}(\tau_k)\rvert^2\bigr] &= (2\pi\Delta f)^2 \sum_{n=0}^{N-1} n^2 \mathrm{Var}(z_{\text{RF},n}) \nonumber \\
    &= (2\pi\Delta f)^2 \sigma^2 \nu_{-2} \sum_{n=0}^{N-1} n^2.
\end{align}
Regarding the curvature, the second derivative is dominated by the target signal component, as the zero-mean noise vanishes in expectation. The square of its expectation is given by
\begin{align}
    \bigl\lvert \mathbb{E}[\ddot{s}_{\text{RF}}(\tau_k)] \bigr\rvert^2 = (2\pi\Delta f)^4 |\alpha_k|^2 \left( \sum_{n=0}^{N-1} n^2 \right)^2.
\end{align}
Substituting these into \eqref{MSE3} yields:
\begin{align}
    \mathrm{MSE}_{\text{RF},k} \approx \frac{\sigma^2 \nu_{-2}}{2 |\alpha_k|^2 (2\pi\Delta f)^2 \sum_{n=0}^{N-1} n^2}.
\end{align}
For large $N$, applying $\sum_{n=0}^{N-1} n^2 \approx N^3/3$ results in the closed-form expression in \eqref{Theo_eq3}, completing the proof.

\bibliographystyle{IEEEtran}
% argument is your BibTeX string definitions and bibliography database(s)
\bibliography{IEEEabrv,reference}
\end{document}